\documentclass[a4paper]{article}

  \usepackage{microtype}
  \usepackage[ruled,linesnumbered,noline,noend]{algorithm2e}
  \usepackage{amsmath,amsthm}
  \usepackage{fullpage}
  \usepackage{authblk}
  \usepackage{enumerate}
  \usepackage{hyperref}
  \usepackage[T1]{fontenc}
  \usepackage[utf8]{inputenc}
  \usepackage{lmodern}

\newtheorem{theorem}{Theorem}[section]
\newtheorem{fact}[theorem]{Fact}
\newtheorem{lemma}[theorem]{Lemma}
\newtheorem{observation}[theorem]{Observation}
\newtheorem{proposition}[theorem]{Proposition}
\newtheorem{corollary}[theorem]{Corollary}
\newtheorem{claim}[theorem]{Claim}
\newtheorem{definition}[theorem]{Definition}

\newcommand{\Oh}{\mathcal{O}}

\usepackage{mathtools}
\DeclarePairedDelimiter{\floor}{\lfloor}{\rfloor}

\newcommand{\Occ}{\mathsf{Occ}}
\newcommand{\Covered}{\mathsf{Covered}}
\newcommand{\StartOcc}{\mathsf{StartOcc}}
\newcommand{\PREF}{\mathsf{PREF}}
\newcommand{\lcp}{\mathsf{lcp}}
\newcommand{\rot}{\mathsf{rot}}

\renewcommand{\L}{\mathcal{L}}
\newcommand{\D}{\mathcal{D}}
\newcommand{\num}{\mathit{num}}
\newcommand{\suma}{\mathit{sum}}

\newcommand{\res}{\mathit{res}}
\newcommand{\Pred}{\mathit{Pred}}

\newcommand{\RM}{\mathit{RM}}

\newcommand{\Ham}{\mathit{Ham}}
\newcommand{\Lev}{\mathit{Lev}}
\newcommand{\ed}{\mathit{ed}}

 \newcommand{\defproblem}[3]{
  \vspace{2mm}
\noindent\fbox{
  \begin{minipage}{0.96\textwidth}
  \textsc{#1}\\
  {\bf{Input:}} #2  \\
  {\bf{Output:}} #3
  \end{minipage}
  }
  \vspace{2mm}
}

\begin{document}

\title{$k$-Approximate Quasiperiodicity under~Hamming~and~Edit~Distance}

\author[1]{Aleksander K{\k{e}}dzierski}
\author[1]{Jakub Radoszewski}

  \affil[1]{\normalsize Institute of Informatics,  University of Warsaw, Warsaw, Poland and \protect\\
    Samsung R\&D Institute, Warsaw, Poland, \protect\\
    \texttt{aa.kedzierski2@uw.edu.pl}, \texttt{jrad@mimuw.edu.pl}}

\date{\vspace{-.8cm}}

\maketitle

\begin{abstract}
Quasiperiodicity in strings was introduced almost 30 years ago as an extension of string periodicity.
The basic notions of quasiperiodicity are cover and seed.
A cover of a text $T$ is a string whose occurrences in $T$ cover all positions of $T$.
A seed of text $T$ is a cover of a superstring of $T$.
In various applications exact quasiperiodicity is still not sufficient due to the presence of errors.
We consider approximate notions of quasiperiodicity, for which we allow approximate occurrences in $T$
with a small Hamming, Levenshtein or weighted edit distance.

In previous work Sip et al.\ (2002) and Christodoulakis et al.\ (2005) showed that computing approximate covers and seeds, respectively,
under weighted edit distance is NP-hard.
They, therefore, considered restricted approximate covers and seeds which need to be factors of the original string $T$ and presented polynomial-time algorithms
for computing them.
Further algorithms, considering approximate occurrences with Hamming distance bounded by $k$, were given in several contributions by Guth et al.
They also studied relaxed approximate quasiperiods that do not need to cover all positions of $T$.

In case of large data the exponents in polynomial time complexity play a crucial role.
We present more efficient algorithms for computing restricted approximate covers and seeds.
In particular, we improve upon the complexities of many of the aforementioned algorithms, also for relaxed quasiperiods.
Our solutions are especially efficient if the number (or total cost) of allowed errors is bounded.
We also show NP-hardness of computing non-restricted approximate covers and seeds under Hamming distance.

Approximate covers were studied in three recent contributions at CPM over the last three years.
However, these works consider a different definition of an approximate cover of $T$, that is, the shortest exact cover of a string $T'$ with the smallest Hamming distance from $T$.
\end{abstract}

\section{Introduction}
Quasiperiodicity was introduced as an extension of periodicity~\cite{DBLP:journals/tcs/ApostolicoE93}. Its aim is to capture repetitive structure of strings that do not have an exact period.
The basic notions of quasiperiodicity are cover (also called quasiperiod) and seed.
A \emph{cover} of a string $T$ is a string $C$ whose occurrences cover all positions of $T$.
A \emph{seed} of string $T$ is a cover of a superstring of $T$.
Covers and seeds were first considered in \cite{DBLP:journals/ipl/ApostolicoFI91} and \cite{DBLP:journals/algorithmica/IliopoulosMP96}, respectively,
and linear-time algorithms computing them are known;
see \cite{DBLP:journals/ipl/Breslauer92,DBLP:journals/algorithmica/IliopoulosMP96,DBLP:journals/algorithmica/LiS02,DBLP:journals/ipl/MooreS94,DBLP:journals/ipl/MooreS95}
and \cite{10.1145/3386369}.

A cover is necessarily a \emph{border}, that is, a prefix and a suffix of the string.
A seed $C$ of $T$ covers all positions of $T$ by its occurrences or by left- or right-overhangs, that is, by suffixes of $C$ being prefixes of $T$
and prefixes of $C$ being suffixes of $T$.
In order to avoid extreme cases one usually assumes that covers $C$ of $T$ need to satisfy $|C|<|T|$ and seeds $C$ need to satisfy $2|C| \le |T|$
(so a seed needs to be a factor of $T$).
Seeds, unlike covers, preserve an important property of periods that if $T$ has a period or a seed, then every (sufficiently long) factor of $T$ has the same period or seed, respectively.

The classic notions of quasiperiodicity may not capture repetitive structure of strings in practical settings; it was also confirmed
by a recent experimental study~\cite{DBLP:journals/corr/abs-1909-11336}.
In order to tackle this problem, further types of quasiperiodicity were studied that require that only a certain number of positions in a string are covered.
This way notions of enhanced cover, partial cover and partial seed were introduced.
A \emph{partial cover} and \emph{partial seed} are required to cover a given number of positions of a string,
where for the partial seed overhangs are allowed, and an \emph{enhanced cover} is a partial cover with an additional requirement of being a border of the string.
$\Oh(n \log n)$-time algorithms for computing shortest partial covers and seeds were shown in \cite{DBLP:journals/algorithmica/KociumakaPRRW15}
and \cite{DBLP:journals/tcs/KociumakaPRRW16}, respectively, whereas a linear-time algorithm for computing a proper enhanced cover
that covers the maximum number of positions in $T$ was presented (among other variations of the problem) in \cite{DBLP:journals/tcs/FlouriIKPPST13}.

Further study has lead to \emph{approximate quasiperiodicity} in which approximate occurrences of a quasiperiod are allowed.
In particular, Hamming, Levenshtein and weighted edit distance were considered.
A \emph{$k$-approximate cover} of string $T$ is a string $C$ whose approximate occurrences with distance at most $k$ cover $T$.
Similarly one can define a \emph{$k$-approximate seed}, allowing overhangs.
These notions were introduced by Sip et al.~\cite{SimParkKimLee} and Christodoulakis et al.~\cite{DBLP:journals/jalc/ChristodoulakisIPS05}, respectively,
who showed that the problem of checking if a string $T$ has a
$k$-approximate cover and $k$-approximate seed, respectively, for a given $k$ is NP-complete under weighted edit distance.
(Their proof used arbitrary integer weights and a constant-sized---12 letters in the case of approximate seeds---alphabet.)
Therefore, they considered a \emph{restricted} version of the problem in which the approximate cover or seed is required to be a factor of~$T$.
Formally, the problem is to compute, for every factor of $T$, the smallest $k$ for which it is a $k$-approximate cover or seed of $T$.
For this version of the problem, they presented an $\Oh(n^3)$-time algorithm for the Hamming distance and an $\Oh(n^4)$-time algorithm for the edit distance\footnote{%
In fact, they consider \emph{relative} Hamming and Levenshtein distances
which are inversely proportional to the length of the candidate factor and seek for an approximate cover/seed that minimizes such distance.
However, their algorithms actually compute the minimum distance $k$ for every factor of $T$ under the standard distance definitions.}.
The same problems under Hamming distance were considered by
Guth et al.~\cite{DBLP:conf/itat/GuthMB08} and Guth and Melichar~\cite{Guth2010}.
They studied a \emph{$k$-restricted} version of the problems, in which we are only interested in factors of $T$ being $\ell$-approximate covers or seeds for $\ell \le k$,
and developed $\Oh(n^3(|\Sigma|+k))$-time and $\Oh(n^3|\Sigma|k)$-time automata-based algorithms for $k$-restricted approximate covers and seeds, respectively.
Experimental evaluation of these algorithms was performed by Guth~\cite{guththesis}.

Recently, Guth~\cite{GUTH2019} extended this study to \emph{$k$-approximate restricted enhanced covers} under Hamming distance.
In this problem, we search for a border of $T$ whose $k$-approximate occurrences cover the maximum number of text positions.
In another variant of the problem, which one could see as approximate partial cover problem, we
only require the approximate enhanced cover to be a $k$-approximate border of $T$, but still to be a factor of $T$.
Guth~\cite{GUTH2019} proposed $\Oh(n^2)$-time and $\Oh(n^3(|\Sigma|+k))$-time algorithms for the two respective variants.

We improve upon previous results on restricted approximate quasiperiodicity.
We introduce a general notion of \emph{$k$-coverage} of a string $S$ in a string $T$, defined as the number of positions in $T$
that are covered by $k$-approximate occurrences of $S$.
Efficient algorithms computing the $k$-coverage for factors of $T$ are presented.
We also show NP-hardness for non-restricted approximate covers and seeds under the Hamming distance.
A detailed list of our results is as follows.

\begin{enumerate}
\item The Hamming $k$-coverage for every prefix and for every factor of a string of length $n$ can be computed in $\Oh(nk^{2/3}\log^{1/3}n\log k)$ time
(for a string over an integer alphabet) and $\Oh(n^2)$ time, respectively.
(See Section~\ref{sec:Hamming}.)

With this result we obtain algorithms with the same time complexities for the two versions of $k$-approximate restricted enhanced covers that were proposed by Guth~\cite{GUTH2019}
and an $\Oh(n^2k)$-time algorithm computing $k$-restricted approximate covers and seeds.
Our algorithm for prefixes actually works in linear time assuming that a $k$-mismatch version of the $\PREF$ table~\cite{Jewels} is given.
Thus, as a by-product, for $k=0$, we obtain an alternative linear-time algorithm for computing all (exact) enhanced covers of a string.
(A different linear-time algorithm for this problem was given in \cite{DBLP:journals/tcs/FlouriIKPPST13}).

The complexities come from using tools of Kaplan et al.~\cite{DBLP:conf/swat/KaplanPS06} and Flouri et al.~\cite{DBLP:journals/ipl/FlouriGKU15}, respectively.

\item The $k$-coverage under Levenshtein distance and weighted edit distance for every factor of a string of length $n$ can be computed in
$\Oh(n^3)$ time and $\Oh(n^3\sqrt{n \log n})$ time, respectively.
(See Section~\ref{sec:edit}.)

We also show in Section~\ref{sec:edit} how our approach can be used to compute restricted approximate covers and seeds under
weighted edit distance in $\Oh(n^3\sqrt{n \log n})$ time, thus improving
upon the previous $\Oh(n^4)$-time algorithms of Sip et al.~\cite{SimParkKimLee} and Christodoulakis et al.~\cite{DBLP:journals/jalc/ChristodoulakisIPS05}.

Our algorithm for Levenshtein distance uses incremental string comparison~\cite{DBLP:journals/siamcomp/LandauMS98}.
\item Under Hamming distance, it is NP-hard to check if a given string of length $n$ has a $k$-approximate cover or a $k$-approximate seed
of a given length $c$.
This statement holds even for strings over a binary alphabet.
(See Section~\ref{sec:NPhard}.)

This result extends the previous proofs of Sip et al.~\cite{SimParkKimLee} and Christodoulakis et al.~\cite{DBLP:journals/jalc/ChristodoulakisIPS05} which worked for the weighted edit distance.
\end{enumerate}

A different notion of approximate cover, which we do not consider in this work,
was recently studied in \cite{DBLP:conf/cpm/AmirLLLP17,DBLP:journals/algorithmica/AmirLLLP19,DBLP:conf/cpm/AmirLLP17,DBLP:journals/tcs/AmirLLP19,DBLP:conf/cpm/AmirLP18}.
This work assumed that the string $T$ may not have a cover, but it is at a small Hamming distance from a string $T'$ that has a proper cover.
They defined an approximate cover of $T$ as the shortest cover of a string $T'$ that is closest to $T$ under Hamming distance.
Interestingly, this problem was also shown to be NP-hard \cite{DBLP:journals/algorithmica/AmirLLLP19} and an $\Oh(n^4)$-time algorithm
was developed for it in the restricted case that the approximate cover is a factor of the string $T$ \cite{DBLP:journals/tcs/AmirLLP19}.
Our work can be viewed as complementary to this study as ``the natural definition of an approximate repetition is not clear'' \cite{DBLP:journals/tcs/AmirLLP19}.

\section{Preliminaries}
We consider strings over an alphabet $\Sigma$.
The empty string is denoted by $\varepsilon$.
For a string $T$, by $|T|$ we denote its length and by $T[0],\ldots,T[|T|-1]$ its subsequent letters.
By $T[i,j]$ we denote the string $T[i]\ldots T[j]$ which we call a \emph{factor} of $T$.
If $i=0$, it is a \emph{prefix} of $T$, and if $j=|T|-1$, it is a \emph{suffix} of $T$.
A string that is both a prefix and a suffix of $T$ is called a \emph{border} of $T$.
For a string $T=XY$ such that $|X|=b$, by $\rot_b(T)$ we denote $YX$, called a \emph{cyclic shift} of $T$.

For equal-length strings $U$ and $V$, by $\Ham(U,V)$ we denote their \emph{Hamming distance}, that is, the number of positions where they do not match.
For strings $U$ and $V$, by $\ed(U,V)$ we denote their \emph{edit distance}, that is,
the minimum cost of edit operations (insertions, deletions, substitutions) that allow to transform $U$ to $V$.
Here the cost of an edit operation can vary depending both on the type of the operation and on the letters that take part in it.
In case that all edit operations have unit cost, the edit distance is also called \emph{Levenshtein distance} and denoted here as $\Lev(U,V)$.

For two strings $S$ and $T$ and metric $d$, we denote by
\[\Occ_k^d(S,T) = \{[i,j]\,:\,d(S,T[i,j]) \le k\}\]
the set of approximate occurrences of $S$ in $T$, represented as intervals, under the metric $d$.
We then denote by 
\[\Covered_k^d(S,T) = |\bigcup \Occ_k^d(S,T)| \]
the \emph{$k$-coverage} of $S$ in $T$.
In case of Hamming or Levenshtein distances, $k \le n$, but for the weighted edit distance $k$ can be arbitrarily large.
Moreover, by $\StartOcc_k^d(S,T)$ we denote the set of left endpoints of the intervals in $\Occ_k^d(S,T)$.

\begin{definition}
Let $d$ be a metric and $T$ be a string.
We say that string $C$, $|C| < |T|$, is a \emph{$k$-approximate cover} of $T$ under metric $d$ if $\Covered_k^d(C,T) = |T|$.
\end{definition}

We say that string $C$, $2|C| \le |T|$, is a \emph{$k$-approximate seed} of $T$ if it is a $k$-approximate cover of some string $T'$ whose factor is $T$.
Let $\diamondsuit$ be a wildcard symbol that matches every other symbol of the alphabet.
Strings over $\Sigma \cup \{\diamondsuit\}$ are also called \emph{partial words}.
In order to compute $k$-approximate seeds, it suffices to consider $k$-approximate covers of $\diamondsuit^{|T|} T \diamondsuit^{|T|}$.

The main problems in scope can now be stated as follows.

\defproblem{General $k$-Approximate Cover/Seed}{
  String $T$ of length $n$, metric $d$, integer $c \in \{1,\ldots,n-1\}$ and number $k$
}{
  A string $C$ of length $c$ that is a $k$-approximate cover/seed of $T$ under $d$
}

\defproblem{Prefix/Factor $k$-Coverage}{
  String $T$ of length $n$, metric $d$ and number $k$
}{
  For every prefix/factor of $T$, compute its $k$-coverage under $d$
}

\defproblem{Restricted Approximate Covers/Seeds}{
  String $T$ of length $n$ and metric $d$
}{
  Compute, for every factor $C$ of $T$, the smallest $k$ such that $C$ is a $k$-approximate cover/seed of $T$ under $d$
}

\subsection{Algorithmic Toolbox for Hamming Distance}
For a string $T$ of length $n$, by $\lcp_k(i,j)$ we denote the length of the longest common prefix with at most $k$ mismatches of the suffixes $T[i,n-1]$ and $T[j,n-1]$.
Flouri et al.~\cite{DBLP:journals/ipl/FlouriGKU15} proposed an $\Oh(n^2)$-time algorithm to compute the longest common factor of two strings $T_1$, $T_2$ with at most $k$ mismatches.
Their algorithm actually computes the lengths of the longest common prefixes with at most $k$ mismatches of every two suffixes $T_1[i,|T_1|-1]$ and $T_2[j,|T_2|-1]$
and returns the maximum among them.
Applied for $T_1=T_2$, it gives the following result.

\begin{lemma}[\cite{DBLP:journals/ipl/FlouriGKU15}]\label{lem:Flouri}
  For a string of length $n$, values $\lcp_k(i,j)$ for all $i,j=0,\ldots,n-1$ can be computed in $\Oh(n^2)$ time.
\end{lemma}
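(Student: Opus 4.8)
The plan is to reduce the computation to $2n-1$ independent one-dimensional problems, one per \emph{diagonal} $h=j-i$. Note first that $\lcp_k$ is symmetric and that $\lcp_k(i,i)=n-i$, so it suffices to treat pairs with $i<j$. Fix $h=j-i>0$. For a candidate length $\ell$ with $i+\ell-1\le n-1-h$ (so that both factors stay inside $T$), we have $\Ham(T[i,i+\ell-1],T[i+h,i+h+\ell-1])=\big|\{p\in\{i,\ldots,i+\ell-1\}\,:\,T[p]\neq T[p+h]\}\big|$. Hence, letting $M_h$ be the binary array of length $n-h$ with $M_h[p]=1$ iff $T[p]\neq T[p+h]$, the value $\lcp_k(i,i+h)$ is the largest $\ell$ such that $M_h[i]+M_h[i+1]+\cdots+M_h[i+\ell-1]\le k$ and $i+\ell\le n-h$.

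For a fixed $h$ this family of queries (over all valid left endpoints $i$) is solved by a single two-pointer scan: maintain a left pointer $i$, a right pointer $r$, and the number of ones in $M_h[i,r-1]$; while $r<n-h$ and including $M_h[r]$ would keep the count of ones at most $k$, advance $r$; output $r-i$ as $\lcp_k(i,i+h)$; then advance $i$, subtracting $M_h[i]$ from the count (and resetting $r\leftarrow i$ if $r$ has fallen behind). The correctness of this scan rests on the monotonicity observation that if $M_h[i,r-1]$ has at most $k$ ones then so does $M_h[i+1,r-1]$, hence the optimal right endpoint is non-decreasing in $i$. Because $r$ never decreases and is bounded by $n-h$, diagonal $h$ costs $\Oh(n-h)$ time, including the time to build $M_h$. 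Summing, $\sum_{h=1}^{n-1}\Oh(n-h)+\Oh(n)=\Oh(n^2)$, and together with the trivial diagonal values and the symmetry $\lcp_k(j,i)=\lcp_k(i,j)$ this fills the entire table.

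There is no real obstacle here --- this is exactly the single-string specialization of the longest-common-factor-with-$k$-mismatches algorithm of Flouri et al.~\cite{DBLP:journals/ipl/FlouriGKU15} --- but three points deserve attention. First, the monotonicity of the optimal right endpoint in $i$, which is what makes the two-pointer scan both correct and linear per diagonal. Second, the bound $r<n-h$ must be enforced independently of the mismatch count, since it is this bound (rather than the count) that caps the reported value at $n-h-i$ and prevents either factor from running past position $n-1$. Third, the pointer bookkeeping for small $k$, in particular $k=0$: when a mismatch occurs at $M_h[i]$ the window for that $i$ is empty, $r=i$, and after advancing $i$ the right pointer must be reset so that $r\ge i$ is preserved. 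With these handled the algorithm is immediate.
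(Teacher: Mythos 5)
Your proof is correct, and it is essentially the argument behind the result the paper simply cites: the paper gives no proof of its own beyond invoking Flouri et al., whose algorithm is exactly this diagonal-by-diagonal scan computing, for each starting position, the longest window of the mismatch-indicator array containing at most $k$ ones. Your two-pointer formulation (with the monotonicity of the optimal right endpoint and the per-diagonal $\Oh(n-h)$ bound) is a faithful reconstruction of that algorithm specialized to $T_1=T_2$, so there is nothing to add.
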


We also use a table $\PREF_k$ such that $\PREF_k[i]=\lcp_k(0, i)$.
LCP-queries with mismatches can be answered in $\Oh(k)$ time after linear-time preprocessing using the kangaroo method~\cite{DBLP:journals/tcs/LandauV86}.
In particular, this allows to compute the $\PREF_k$ table in $\Oh(nk)$ time.
Kaplan et al.~\cite{DBLP:conf/swat/KaplanPS06} presented an algorithm that, given a pattern $P$ of length $m$, a text $T$ of length $n$ over an integer alphabet
$\Sigma \subseteq \{1,\ldots,n^{\Oh(1)}\}$, and an integer $k$,
finds in $\Oh(nk^{2/3}\log^{1/3}m\log k)$ time for all positions $j$ of $T$, the index of the $k$-th mismatch of $P$ with the suffix $T[j,n-1]$.
Applied for $P=T$, it gives the following result.

\begin{lemma}[\cite{DBLP:conf/swat/KaplanPS06}]\label{lem:PREFk}
  The $\PREF_k$ table of a string of length $n$ over an integer alphabet can be computed in $\Oh(nk^{2/3}\log^{1/3}n\log k)$ time.
\end{lemma}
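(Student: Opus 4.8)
The plan is to obtain the lemma as a direct instantiation of the algorithm of Kaplan et al.~\cite{DBLP:conf/swat/KaplanPS06} quoted above, taking the pattern equal to the text. Concretely, I would run that algorithm on pattern $P = T$ and text $T$, both of length $n$, with mismatch bound $k+1$ (and the same integer alphabet $\Sigma \subseteq \{1,\ldots,n^{\Oh(1)}\}$). For every position $j \in \{0,\ldots,n-1\}$ this returns the pattern-relative index $x_j$ of the $(k+1)$-th mismatch between $T$ and the suffix $T[j,n-1]$, aligned so that $P[0]$ is compared with $T[j]$, or reports that fewer than $k+1$ mismatches occur within the overlapping region, which has length $n-j$.

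From these values the table $\PREF_k$ is read off in $\Oh(n)$ additional time. Recall that $\PREF_k[j] = \lcp_k(0,j)$ is the length of the longest prefix of $T[j,n-1]$ matching a prefix of $T$ with at most $k$ mismatches. If the algorithm reports fewer than $k+1$ mismatches in the overlap of length $n-j$, then the whole suffix $T[j,n-1]$ matches the corresponding prefix of $T$ with at most $k$ mismatches, so we set $\PREF_k[j] = n-j$. Otherwise the longest common prefix with at most $k$ mismatches ends exactly at the position carrying the $(k+1)$-th error, so we set $\PREF_k[j] = x_j$. (An equivalent route is to run the cited algorithm with bound $k$ and adjust the off-by-one; the only point is that the common prefix reaches up to, but not including, the first position that accumulates the $(k+1)$-th mismatch.)

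Finally I would verify the time bound. The hypothesis on $\Sigma$ in Lemma~\ref{lem:PREFk} is precisely the one required by~\cite{DBLP:conf/swat/KaplanPS06}, and since $m = |P| = n$ the running time of that algorithm is $\Oh\!\left(n\,(k+1)^{2/3}\log^{1/3} n\,\log(k+1)\right)$, which for $k \ge 1$ simplifies to $\Oh(nk^{2/3}\log^{1/3} n\,\log k)$; the degenerate case $k = 0$ is just the ordinary $\PREF$ table, computable in linear time by the standard prefix computation, which is dominated by the stated bound. Adding the $\Oh(n)$ post-processing above does not affect the asymptotics. I do not expect any real obstacle here: all the algorithmic weight sits in the cited result, and the only point requiring care is the boundary handling for positions $j$ whose suffix $T[j,n-1]$ is too short to accumulate $k+1$ mismatches.
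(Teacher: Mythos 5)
Your proposal is correct and matches the paper's argument: the lemma is obtained exactly by running the algorithm of Kaplan et al.\ with pattern $P=T$ and reading off $\PREF_k$ from the reported mismatch indices. Your explicit handling of the $(k+1)$-st-mismatch off-by-one and of suffixes too short to accumulate $k+1$ mismatches is just a more careful spelling-out of the same reduction.
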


We say that strings $U$ and $V$ have a \emph{$k$-mismatch prefix-suffix of length $p$} if $U$ has a prefix $U'$ of length $p$ and $V$ has a suffix $V'$ of length $p$
such that $\Ham(U',V') \le k$.

\subsection{Algorithmic Toolbox for Edit Distance}
For $x,y \in \Sigma$, let $c(x,y)$, $c(\varepsilon,x)$ and $c(x,\varepsilon)$ be the costs of substituting letter $x$ by letter $y$ (equal to 0 if $x=y$),
inserting letter $x$ and deleting letter $x$, respectively.
They are usually specified by a penalty matrix $c$; it implies a metric if certain conditions are satisfied (identity of indiscernibles, symmetry, triangle inequality).

The classic dynamic programming solution to the edit distance problem (see~\cite{WF}) for strings $T_1$ and $T_2$ uses the so-called $D$-table such that
$D[i,j]$ is the edit distance between prefixes $T_1[0,i]$ and $T_2[0,j]$.
Initially $D[-1,-1]=0$, $D[i,-1]=D[i-1,-1]+c(T_1[i],\varepsilon)$ for $i \ge 0$ and $D[-1,j]=D[-1,j-1]+c(\varepsilon,T_2[j])$ for $j \ge 0$.
For $i,j \ge 0$, $D[i,j]$ can be computed as follows:
\[ D[i,j] = \min(D[i-1,j-1]+c(T_1[i],T_2[j]),\,D[i,j-1]+c(\varepsilon,T_2[j]),\,D[i-1,j]+c(T_1[i],\varepsilon)). \]

Given a threshold $h$ on the Levenshtein distance, Landau et al.~\cite{DBLP:journals/siamcomp/LandauMS98} show how to compute the Levenshtein distance between
$T_1$ and $bT_2$, for any $b \in \Sigma$, in $\Oh(h)$ time using previously
computed solution for $T_1$ and $T_2$ (another solution was given later by Kim and Park~\cite{DBLP:journals/jda/KimP04}).
They define an \emph{$h$-wave} that contains indices of the last value $h$ in diagonals of the $D$-table.
Let $L^h(d) = \max \{ i : D[i, i+d] = h\}$.
Formally an $h$-wave is:
\[L^h = [ L^h(-h), L^h(-h+1),\ldots,L^h(h-1), L^h(h) ].\]
Landau et al.~\cite{DBLP:journals/siamcomp/LandauMS98} show how to update the $h$-wave when string $T_2$ is prepended by a single letter in $\Oh(h)$ time.
This method was introduced to approximate periodicity in~\cite{DBLP:journals/tcs/SimIPS01}.

\section{Computing $k$-Coverage under Hamming Distance}\label{sec:Hamming}
Let $T$ be a string of length $n$ and assume that its $\PREF_k$ table is given.
We will show a linear-time algorithm for computing the $k$-coverage of every prefix of $T$ under the Hamming distance.

In the algorithm we consider all prefix lengths $\ell=1,\ldots,n$.
At each step of the algorithm, a linked list $\L$ is stored that contains all positions $i$ such that $\PREF_k[i] \ge \ell$ and a sentinel value $n$, in an increasing order.
The list is stored together with a table $A(\L)[0..n-1]$ such that $A(\L)[i]$ is a link to the occurrence of $i$ in $\L$ or \textbf{nil} if $i \not\in \L$.
It can be used to access and remove a given element of $\L$ in $\Oh(1)$ time.
Before the start of the algorithm, $\L$ contains all numbers $0,\ldots,n$.

If $i \in \L$ and $j$ is the successor of $i$ in $\L$, then the approximate occurrence of $T[0,\ell-1]$ at position $i$ accounts for $\min(\ell,j-i)$ positions
that are covered in $T$.
A pair of adjacent elements $i<j$ in $\L$ is called \emph{overlapping} if $j-i < \ell$ and \emph{non-overlapping} otherwise.
Hence, each non-overlapping adjacent pair adds the same amount to the number of covered positions.

All pairs of adjacent elements of $\L$ are partitioned in two data structures, $\D_o$ and $\D_{no}$, that store overlapping and non-overlapping pairs, respectively.
Data structure $\D_{no}$ stores non-overlapping pairs $(i,j)$ in buckets that correspond to $j-i$, in a table $B(\D_{no})$ indexed from 1 to $n$.
It also stores a table $A(\D_{no})$ indexed 0 through $n-1$ such that $A(\D_{no})[i]$ points to the location of $(i,j)$ in its bucket, provided that such a pair exists for some $j$,
or \textbf{nil} otherwise.
Finally, it remembers the number $\num(\D_{no})$ of stored adjacent pairs.
$\D_o$ does not store the overlapping adjacent pairs $(i,j)$ explicitly, just the sum of values $j-i$, as $\suma(\D_o)$.
Then
\begin{equation}\label{eq:covered1}
\Covered_k^{\Ham}(T[0,\ell-1],T) = \suma(\D_o) + \num(\D_{no}) \cdot \ell.
\end{equation}

Now we need to describe how the data structures are updated when $\ell$ is incremented.

In the algorithm we store a table $Q[0..n]$ of buckets containing pairs $(\PREF_k[i],i)$ grouped by the first component.
When $\ell$ changes to $\ell+1$, 
the second components of all pairs from $Q[\ell]$ are removed, one by one, from the list $\L$ (using the table $A(\L)$).

Let us describe what happens when element $q$ is removed from $\L$.
Let $q_1$ and $q_2$ be its predecessor and successor in $\L$. (They exist because $0$ and $n$ are never removed from $\L$.)
Then each of the pairs $(q_1,q)$ and $(q,q_2)$ is removed from the respective data structure $\D_o$ or $\D_{no}$, depending on the difference of elements.
Removal of a pair $(i,j)$ from $\D_o$ simply consists in decreasing $\suma(\D_o)$ by $j-i$,
whereas to remove $(i,j)$ from $\D_{no}$ one needs to remove it from the right bucket (using the table $A(\D_{no})$) and decrement $\num(\D_o)$.
In the end, the pair $(q_1,q_2)$ is inserted to $\D_o$ or to $\D_{no}$ depending on $q_2-q_1$.
Insertion to $\D_o$ and to $\D_{no}$ is symmetric to deletion.

When $\ell$ is incremented, non-overlapping pairs $(i,j)$ with $j-i=\ell$ become overlapping.
Thus, all pairs from the bucket $B(\D_{no})[\ell]$ are removed from $\D_{no}$ and inserted to $\D_o$.

This concludes the description of operations on the data structures.
Correctness of the resulting algorithm follows from~\eqref{eq:covered1}.
We analyze its complexity in the following theorem.

\begin{theorem}
  Let $T$ be a string of length $n$.
  Assuming that the $\PREF_k$ table for string $T$ is given, the $k$-coverage of every prefix of $T$ under the Hamming distance can be computed in $\Oh(n)$ time.
\end{theorem}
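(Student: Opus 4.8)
The plan is to show that the entire algorithm, as described above, runs in $\Oh(n)$ time by a careful amortized analysis of all operations on the data structures $\L$, $\D_o$, $\D_{no}$, and $Q$. First I would observe that the table $Q$ can be built in $\Oh(n)$ time: we iterate over all positions $i=0,\ldots,n-1$ and append the pair $(\PREF_k[i],i)$ to bucket $Q[\PREF_k[i]]$, which is a constant-time operation per position since all values $\PREF_k[i]$ lie in $\{0,\ldots,n\}$. Likewise, initializing $\L$ with $0,\ldots,n$ together with the access table $A(\L)$, and initializing $\D_o$ and $\D_{no}$ with all initial adjacent pairs (which are the pairs $(i,i+1)$, all non-overlapping once $\ell=1$ unless $\ell>1$, but in any case there are $n$ of them), takes $\Oh(n)$ time.

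Next I would account for the work done as $\ell$ ranges from $1$ to $n$. The key point is that each element of $\{0,\ldots,n-1\}$ is removed from $\L$ exactly once over the whole execution — namely, element $i$ is removed when $\ell$ reaches $\PREF_k[i]+1$, which is exactly when $i$ appears in $Q[\PREF_k[i]]$ and gets processed. Each such removal triggers a constant number of data-structure updates: locating the predecessor $q_1$ and successor $q_2$ in $\L$ (via the doubly-linked-list structure), removing the two pairs $(q_1,q)$ and $(q,q_2)$ from whichever of $\D_o$, $\D_{no}$ they reside in, and inserting the merged pair $(q_1,q_2)$. Using the auxiliary tables $A(\L)$ and $A(\D_{no})$ and the bucket table $B(\D_{no})$, each of these individual operations — linked-list deletion, bucket insertion/deletion, updating $\suma(\D_o)$, $\num(\D_{no})$ — is $\Oh(1)$. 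Hence the total cost of all removals over the algorithm is $\Oh(n)$.

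Then I would handle the remaining source of work: when $\ell$ is incremented, the pairs in bucket $B(\D_{no})[\ell]$ migrate from $\D_{no}$ to $\D_o$. Each such migration is $\Oh(1)$, so I must bound the total number of migrations. The crucial observation is that a pair $(i,j)$ can migrate from $\D_{no}$ to $\D_o$ at most once in its lifetime: once $\ell$ exceeds $j-i$ the pair is overlapping, and $\ell$ only increases; moreover a pair ceases to exist (is replaced by a merged pair) the moment either endpoint is deleted from $\L$, and it is never recreated. Since each pair is created either at initialization ($n$ pairs) or as a merged pair upon a deletion (one per deletion, so at most $n$ of them), the total number of pairs ever in existence is $\Oh(n)$, bounding the total number of migrations by $\Oh(n)$. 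Finally, reading off $\Covered_k^{\Ham}(T[0,\ell-1],T)$ via equation~\eqref{eq:covered1} at each of the $n$ steps costs $\Oh(1)$ per step. Summing all contributions gives $\Oh(n)$ total.

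The main obstacle I anticipate is not any single step but making the amortization airtight: one must be careful that every pair is inserted into $\D_{no}$ with its correct bucket index $j-i$ so that the migration at step $\ell$ indeed catches exactly those pairs with $j-i=\ell$, and that newly merged pairs $(q_1,q_2)$ created at a step $\ell$ are placed into $\D_o$ (not $\D_{no}$) precisely when $q_2-q_1<\ell$, so they are never later re-migrated. I would verify this invariant — that $\D_o$ holds exactly the adjacent pairs with gap $<\ell$ and $\D_{no}$ exactly those with gap $\ge\ell$, correctly bucketed — is maintained by each of the three update types (deletion-driven merge, the $B(\D_{no})[\ell]$ migration, and the implicit re-indexing as $\ell$ grows), which together with the $\Oh(n)$ bound on the number of pairs and deletions yields the claimed running time.
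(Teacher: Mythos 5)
Your proposal is correct and follows essentially the same amortized argument as the paper: there are $\Oh(n)$ adjacent pairs ever created (the initial $n$ plus one per removal from $\L$), each pair is inserted once, migrated from $\D_{no}$ to $\D_o$ at most once, and deleted once, all in $\Oh(1)$ time per operation. The only cosmetic slip is the claim that every element of $\{0,\ldots,n-1\}$ is removed from $\L$ exactly once -- elements $i$ with $\PREF_k[i]=n$ (e.g., $i=0$) are never removed -- but ``at most once'' is all the bound needs.
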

\begin{proof}
  There are up to $n$ removals from $\L$.
  Initially $\L$ contains $n$ adjacent pairs.
  Every removal from $\L$ introduces one new adjacent pair, so the total number of adjacent pairs that are considered in the algorithm is $2n-1$.
  Each adjacent pair is inserted to $\D_o$ or to $\D_{no}$, then it may be moved from $\D_{no}$ to $\D_o$, and finally it is removed from its data structure.
  In total, $\Oh(n)$ insertions and deletions are performed on the two data structures, in $\Oh(1)$ time each.
  This yields the desired time complexity of the algorithm.
\end{proof}

Let us note that in order to compute the $k$-coverage of all factors of $T$ that start at a given position $i$, it suffices to
use a table $[\lcp_k(i,0),\ldots,\lcp_k(i,n-1)]$ instead of $\PREF_k$.
Together with Lemma~\ref{lem:Flouri} this gives the following result.

\begin{corollary}
  Let $T$ be a string of length $n$.
  The $k$-coverage of every factor of $T$ under the Hamming distance can be computed in $\Oh(n^2)$ time.
\end{corollary}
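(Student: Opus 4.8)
The plan is straightforward: the Corollary is an immediate consequence of the preceding Theorem together with Lemma~\ref{lem:Flouri}, so the proof amounts to observing that the linear-time prefix algorithm can be reapplied $n$ times with a different starting position. First I would note the key observation stated just before the Corollary: the prefix algorithm of the Theorem uses the $\PREF_k$ table only as a black box providing, for each position $i$, the value $\lcp_k(0,i)$, the length of the longest common prefix with at most $k$ mismatches between $T$ and the suffix $T[i,n-1]$. Nothing in the data-structure manipulations ($\L$, $\D_o$, $\D_{no}$, the buckets $Q$) depends on the starting position being $0$; the role of $0$ is only that $\PREF_k[i]=\lcp_k(0,i)$ measures how far the prefix $T[0,\ell-1]$ matches starting at $i$.

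Second, I would make this explicit. Fix a starting position $s \in \{0,\ldots,n-1\}$. Define the auxiliary table $\PREF_k^{(s)}[i] = \lcp_k(s,i)$ for $i = 0,\ldots,n-1$ (with the sentinel value $n$ as before). Then for each prefix length $\ell$, the factor $T[s,s+\ell-1]$ has a $k$-approximate occurrence starting at position $i$ precisely when $\lcp_k(s,i) \ge \ell$, exactly the condition that governed membership in $\L$ in the Theorem's algorithm. Hence running the algorithm of the Theorem verbatim with $\PREF_k^{(s)}$ in place of $\PREF_k$ computes $\Covered_k^{\Ham}(T[s,s+\ell-1],T)$ for every $\ell = 1,\ldots,n$, i.e.\ the $k$-coverage of every factor of $T$ beginning at $s$. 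The correctness argument is identical to that of the Theorem, as equation~\eqref{eq:covered1} still holds; only the reinterpretation of the entries as $\lcp_k(s,\cdot)$ values changes. Each such run takes $\Oh(n)$ time by the Theorem, so over all $n$ choices of $s$ we spend $\Oh(n^2)$ time and cover every factor of $T$.

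Third, I would account for the preprocessing. We need all the values $\lcp_k(s,i)$ for $s,i = 0,\ldots,n-1$; these are furnished in $\Oh(n^2)$ time by Lemma~\ref{lem:Flouri}. Grouping them into the $n$ tables $\PREF_k^{(s)}$ and, for each $s$, building the bucket table $Q$ takes $\Oh(n)$ time per starting position, hence $\Oh(n^2)$ in total. Combining the $\Oh(n^2)$ preprocessing with the $\Oh(n^2)$ for the $n$ runs gives the claimed $\Oh(n^2)$ bound.

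There is essentially no obstacle here: the only point requiring a word of care is the observation that the Theorem's algorithm is ``generic'' in the first argument of $\lcp_k$, which is exactly the content of the sentence preceding the Corollary in the excerpt. I would therefore keep the proof to two or three sentences, citing the Theorem and Lemma~\ref{lem:Flouri} and remarking that correctness carries over unchanged because~\eqref{eq:covered1} is insensitive to the choice of starting position.
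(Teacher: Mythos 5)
Your proposal is correct and matches the paper's argument exactly: the paper's proof is precisely the remark that running the prefix algorithm with the table $[\lcp_k(s,0),\ldots,\lcp_k(s,n-1)]$ in place of $\PREF_k$ handles all factors starting at $s$, combined with Lemma~\ref{lem:Flouri} to obtain all $\lcp_k$ values in $\Oh(n^2)$ time, giving $n$ runs of $\Oh(n)$ each. Your additional bookkeeping (per-start bucket tables, insensitivity of~\eqref{eq:covered1} to the starting position) is just a fuller writeup of the same idea.
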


\section{Computing $k$-Coverage under Edit Distance}\label{sec:edit}

Let us state an abstract problem that, to some extent, is a generalization of the $k$-mismatch $\lcp$-queries to the edit distance.

\defproblem{Longest Approximate Prefix Problem}{
  A string $T$ of length $n$, a metric $d$ and a number $k$
}{
  A table $P_k^d$ such that $P_k^d[a,b,a']$ is the maximum $b'\ge a'-1$ such that $d(T[a,b],T[a',b']) \le k$ or $-1$ if no such $b'$ exists.
}

Having the table $P_k^d$, one can easily compute the $k$-coverage of a factor $T[a,b]$ under metric $d$ as:
\begin{equation}\label{eq:CovEd}
\Covered_k^d(T[a,b],T)\,=\,\left|\bigcup_{a'=0}^{n-1} [a',P_k^d[a,b,a']]\right|,
\end{equation}
where an interval of the form $[a',b']$ for $b'<a'$ is considered to be empty.
The size of the union of $n$ intervals can be computed in $\Oh(n)$ time, which gives $\Oh(n^3)$ time over all factors.

In Section~\ref{subsec:Lev} and~\ref{subsec:ed} we show how to compute the tables $P_k^{\Lev}$ and $P_k^{\ed}$ for a given threshold $k$
in $\Oh(n^3)$ and $\Oh(n^3\sqrt{n\log n})$ time, respectively.
Then in Section~\ref{subsec:Costas} we apply the techniques of Section~\ref{subsec:ed} to obtain an $\Oh(n^3\sqrt{n\log n})$-time algorithm
for computing restricted approximate covers and seeds under the edit distance.

\subsection{Longest Approximate Prefix under Levenshtein Distance}\label{subsec:Lev}

Let $H_{i,j}$ be the $h$-wave for strings $T[i, n-1]$ and $T[j, n-1]$ and $h=k$. Then we can compute $P_k^{\Lev}$ with Algorithm~\ref{algo:Lev}.
The algorithm basically takes the rightmost diagonal of $D$-table in which the value in row $b-a+1$ is less than or equal to $k$.

\begin{algorithm}[h!]
\For{$a':=n-1$ \KwSty{down to} $0$}{
	Compute $H_{n-1,a'}$\;
	\For{$a:=n-1$ \KwSty{down to} $0$}{
		\If{$a < n$} {Compute $H_{a,a'}$ from $H_{a+1,a'}$\;}

		$d:=k$\;
		\For {$b:=a$ \KwSty{to} $n-1$}{
			$i := b - a + 1$\; 
			\While {$d \geq -k$ \KwSty{and} {$H_{a,a'}(d) < i$} }
			{ $d:=d-1$\;}

			\lIf {$d < -k$} {$P_k^{\Lev}[a, b, a'] := -1$}
			\lElse {$P_k^{\Lev}[a, b, a'] := a' + i + d$}
		}
	}
}
\caption{Computing $P_k^{\Lev}$ table.}
\label{algo:Lev}
\end{algorithm}	

The while-loop can run up to $2k$ times for given $a$ and $a'$.
Computing $H_{n-1,a'}$ takes $\Oh(k^2)$ time and updating $H_{a,a'}$ takes $\Oh(k)$ time. It makes the algorithm run in $\Oh(n^3)$ time.
Together with Equation~\eqref{eq:CovEd} this yields the following result.

\begin{proposition}
  Let $T$ be a string of length $n$.
  The $k$-coverage of every factor of $T$ under the Levenshtein distance can be computed in $\Oh(n^3)$ time.
\end{proposition}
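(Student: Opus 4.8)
The plan is to establish the claimed $\Oh(n^3)$ bound by combining the incremental $h$-wave machinery of Landau et al.~\cite{DBLP:journals/siamcomp/LandauMS98} with Algorithm~\ref{algo:Lev} and Equation~\eqref{eq:CovEd}. First I would verify correctness of Algorithm~\ref{algo:Lev}: for fixed $a$ and $a'$, the $h$-wave $H_{a,a'}$ with $h=k$ records, for each diagonal $d\in\{-k,\ldots,k\}$, the largest row index $i$ with $D[i,i+d]=k$ in the $D$-table comparing $T[a,n-1]$ with $T[a',n-1]$. I want the largest $b'$ with $\Lev(T[a,b],T[a',b'])\le k$; writing $i=b-a+1$ for the number of compared characters on the $T[a,\dd]$ side, this amounts to finding the rightmost diagonal $d$ on which the value in row $i$ is still at most $k$, i.e.\ $H_{a,a'}(d)\ge i$, which then gives $b' = a' + i + d - 1$ adjusted by the indexing convention, matching the assignment $P_k^{\Lev}[a,b,a'] := a'+i+d$. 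Since $D$-values are nondecreasing along a diagonal as $i$ grows and the while-loop only decreases $d$, the loop variable $d$ can be carried across successive values of $b$ without resetting — this is the monotonicity observation that keeps the inner double loop over $(b,d)$ to $\Oh(n+k)$ per pair $(a,a')$ rather than $\Oh(nk)$.

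Next I would do the complexity accounting. For each of the $n$ choices of $a'$, we compute $H_{n-1,a'}$ from scratch, which by the construction of Landau et al.\ costs $\Oh(k^2)$ (filling a wave of width $2k+1$, each entry found by a longest-common-extension/kangaroo step in $\Oh(k)$ time). Then for $a$ running from $n-1$ down to $0$ we obtain $H_{a,a'}$ from $H_{a+1,a'}$ by a single prepend-a-letter update, which costs $\Oh(k)$; over all $a$ this is $\Oh(nk)$ per fixed $a'$. The innermost work — the loop over $b$ together with the amortized decrements of $d$ — is $\Oh(n+k)$ per pair $(a,a')$, hence $\Oh(n(n+k))$ per fixed $a'$. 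Summing over $a'$ gives $\Oh(n\cdot(k^2 + nk + n(n+k))) = \Oh(n^3 + n^2k^2)$; since $k\le n$ for Levenshtein distance (stated in the preliminaries), $n^2k^2\le n^4$... so I need to be a little more careful: the $\Oh(k^2)$ term appears only $n$ times, giving $\Oh(nk^2)\le \Oh(n^3)$, and the dominant term is $\Oh(n^3)$ from the $\Oh(n(n+k))$ inner work. Finally, by Equation~\eqref{eq:CovEd}, once $P_k^{\Lev}$ is available, each factor's $k$-coverage is the size of a union of at most $n$ intervals, computable in $\Oh(n)$ time after sorting the $n$ left endpoints (already in order as $a'$ ranges), so $\Oh(n^2)$ factors cost $\Oh(n^3)$ in total, which does not dominate.

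The main obstacle I expect is pinning down the off-by-one bookkeeping between (i) the $D$-table indexing used by Landau et al.\ (which, per the preliminaries, initializes $D[-1,-1]=0$ and indexes prefixes $T_1[0,i]$), (ii) the "number of compared characters" variable $i=b-a+1$ in the algorithm, and (iii) the returned value $a'+i+d$ versus the interval right endpoint $b'$; getting $P_k^{\Lev}[a,b,a']\ge a'-1$ exactly in the boundary case (when only overhang-like behavior via deletions on the $T[a',\dd]$ side makes $T[a',b']$ with $b'=a'-1$, i.e.\ the empty string, viable, which requires $b-a+1\le k$) needs the ranges of $d$ and the $d<-k$ test to be exactly right. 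A secondary point requiring care is justifying that the single-letter $h$-wave update of Landau et al.\ applies here: their incremental update prepends a letter to $T_2$, and here passing from $H_{a+1,a'}$ to $H_{a,a'}$ prepends $T[a]$ to the first string $T[a+1,n-1]$; by the symmetry of Levenshtein distance this is the same operation with the roles of the two strings swapped, so the $\Oh(k)$ update still applies, but this equivalence should be stated explicitly. Everything else is the routine amortization already sketched above.
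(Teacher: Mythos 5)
Your proposal is correct and follows essentially the same route as the paper: you analyze Algorithm~\ref{algo:Lev} via the $h$-wave machinery of Landau et al., use the monotone decrease of the diagonal index $d$ to bound the while-loop by $\Oh(k)$ amortized per pair $(a,a')$, account $\Oh(k^2)$ per from-scratch wave and $\Oh(k)$ per prepend update, and combine the resulting $P_k^{\Lev}$ table with Equation~\eqref{eq:CovEd} to get $\Oh(n^3)$ overall. Your additional care about the indexing conventions and the symmetry argument for prepending to the first string is finer-grained than the paper's own (very terse) justification, but it is the same argument.
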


A similar method could be used in case of constant edit operation costs, by applying the work of~\cite{DBLP:journals/jda/HyyroNI15}.
In the following section we develop a solution for arbitrary costs.

\subsection{Longest Approximate Prefix under Edit Distance}\label{subsec:ed}

For indices $a,a' \in [0,n]$ we define a table $D_{a,a'}$ such that $D_{a,a'}[b,b']$ is the edit distance between $T[a,b]$ and $T[a',b']$, for $b \in [a-1,n-1]$ and $b' \in [a'-1,n-1]$.
For other indices we set $D_{a,a'}[b,b']=\infty$.
The $D_{a,a'}$ table corresponds to the $D$-table for $T[a,n-1]$ and $T[a',n-1]$ and so it can be computed in $\Oh(n^2)$ time.

We say that pair $(d,b)$ (Pareto-)dominates pair $(d',b')$ if $(d,b) \ne (d',b')$, $d \le d'$ and $b \ge b'$.
Let us introduce a data structure $L_{a,a'}[b]$ being a table of all among pairs $(D_{a,a'}[b,b'],b')$ that are maximal in this sense (i.e., are not dominated by other pairs),
sorted by increasing first component.
Using a folklore stack-based algorithm (Algorithm~\ref{algo2}), this data structure can be computed from $D_{a,a'}[b,a'-1],\ldots,D_{a,a'}[b,n-1]$ in linear time.

\begin{algorithm}[h!]
$Q:=\,$empty stack\;
\For{$b':=a'-1$ \KwSty{to} $n-1$}{
  $d:=D_{a,a'}[b,b']$\;
  \While{$Q$ not empty}{
    $(d',x):=\mathit{top}(Q)$\;
    \lIf{$d'\ge d$}{$\mathit{pop}(Q)$}\lElse{\KwSty{break}}
  }
  $\mathit{push}(Q,(d,b'))$\;
}
$L_{a,a'}[b]:=Q$\;
\caption{Computing $L_{a,a'}[b]$ from $D_{a,a'}[b,\cdot]$.}\label{algo2}
\end{algorithm}

Every multiple of $M=\floor{\sqrt{n / \log n}}$ will be called a \emph{special point}.
In our algorithm we first compute the following data structures:
\begin{enumerate}[(a)]
\item\label{it:a} all $L_{a,a'}[b]$ lists where $a$ or $a'$ is a special point, for $a,a' \in [0,n-1]$ and $b \in [a-1,n-1]$ (if $a\ge n$ or $a'\ge n$, the list is empty); and
\item\label{it:b} all cells $D_{a,a'}[b,b']$ of all $D_{a,a'}$ tables for $a,a' \in [0,n]$ and $-1 \le b-a,b'-a' < M-1$.
\end{enumerate}
Computing part \eqref{it:a} takes $\Oh(n^4/M) = \Oh(n^3\sqrt{n\log n})$ time, whereas part \eqref{it:b} can be computed in $\Oh(n^4/M^2) = \Oh(n^3\log n)$ time.
The intuition behind this data structure is shown in the following lemma.

\begin{lemma}\label{lem:special}
  Assume that $b-a \ge M-1$ or $b'-a' \ge M-1$.
  Then there exists a pair of positions $c$, $c'$ such that the following conditions hold:
  \begin{itemize}
  \item $a \le c \le b+1$ and $a' \le c' \le b'+1$, and
  \item $c-a,c'-a' < M$, and
  \item $\ed(T[a,b],T[a',b']) = \ed(T[a,c-1],T[a',c'-1]) + \ed(T[c,b],T[c',b'])$, and
  \item at least one of $c$, $c'$ is a special point.
  \end{itemize}
  Moreover, if $c$ ($c'$) is the special point, then $c \le b$ ($c'\le b'$, respectively).
\end{lemma}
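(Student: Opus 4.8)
\medskip

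The plan is to exploit the standard fact that an optimal alignment between $T[a,b]$ and $T[a',b']$ can be split at any "synchronisation point" of the alignment path, i.e.\ at any vertex through which the optimal path in the edit-distance grid passes. Concretely, fix an optimal alignment $\mathcal{A}$ realising $\ed(T[a,b],T[a',b'])$; think of it as a monotone path in the grid from the corner $(a,a')$ (before consuming any letter) to $(b+1,b'+1)$. For every integer $t$ with $0 \le t \le (b+1-a)+(b'+1-a')$ the path, after $t$ unit steps, sits at some vertex $(c,c')$ with $a \le c \le b+1$, $a' \le c' \le b'+1$, and the additivity $\ed(T[a,b],T[a',b']) = \ed(T[a,c-1],T[a',c'-1]) + \ed(T[c,b],T[c',b'])$ holds, because the prefix and suffix of an optimal alignment are themselves optimal alignments of the corresponding substrings. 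So the third bullet is automatic for \emph{any} vertex on the path; the whole content of the lemma is to show that we can choose such a vertex that additionally (i) lies within the $M\times M$ "near corner" box, namely $c-a,c'-a' < M$, and (ii) has at least one coordinate a multiple of $M$, with the final refinement that if it is $c$ (resp.\ $c'$) that is special then in fact $c\le b$ (resp.\ $c'\le b'$).

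\medskip

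First I would restrict attention to the portion of the path inside the box $\{(x,x') : a\le x\le a+M,\ a'\le x'\le a'+M\}$. The path starts at $(a,a')$, which is in this box, and — using the hypothesis $b-a\ge M-1$ or $b'-a'\ge M-1$, so that $b+1-a\ge M$ or $b'+1-a'\ge M$ — the path must eventually leave the box, i.e.\ there is a first vertex $(c,c')$ on the path with $c-a = M$ or $c'-a' = M$ (it cannot jump past these lines since every step changes a coordinate by at most $1$). Say it is $c-a=M$ (the other case is symmetric). Then $c = a+M$ is a special point if $a$ is a special point or, more usefully, we do not yet know $c$ is special — so this crude choice is not quite enough. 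The fix is to walk \emph{backwards} along the path from this exit vertex until we hit a line $x=\text{(special point)}$ or $x'=\text{(special point)}$: since $M$ is the spacing of special points and we have travelled a segment of the path whose $x$-coordinate spans the full interval $[a,a+M]$ of length exactly $M$, that interval contains a special point $s$ with $a < s \le a+M$ (indeed every length-$M$ interval of integers contains a multiple of $M$), and the path, being monotone and starting at $x=a$, passes through some vertex with $x$-coordinate equal to $s$. Take $c$ to be (the $x$-coordinate of) the first such vertex and $c'$ its companion. Then $c=s$ is a special point, $c-a\le M$; to get the strict inequality $c-a<M$ and $c'-a'<M$ one argues that if $c-a$ were $=M$ we could instead have picked the special point at the \emph{start} of a freshly-opened block, or one simply chooses $s$ to be the \emph{smallest} special point in $(a,a+M]$, which is $\le a+M$ but the box is open, so a small amount of care with the off-by-one (allowing $c=b+1$, which the first bullet permits, and noting $c-a\le b+1-a$) closes it; the clean statement to prove is $c-a\le \min(M, b+1-a)$ and then the last sentence of the lemma, $c\le b$ when $c$ is the special point, is exactly the assertion that this minimum is $\le b-a$, which is where a separate sub-argument is needed.

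\medskip

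The last sentence (``if $c$ is the special point then $c\le b$'') I would handle as follows. Suppose for contradiction the special point we selected is $c = b+1$. Since $c - a = b+1-a$ and we also have $c-a \le M$, this forces $b - a \le M-1$, hence by hypothesis $b' - a' \ge M - 1$, i.e.\ $b'+1-a' \ge M$. In that case I would simply run the whole argument with the roles of the two strings swapped from the outset — look for where the path first exits the box along the \emph{primed} coordinate, producing a special $c'$ with $a' < c' \le a'+M$ and $c'-a'<M$, and $c\le b+1$ automatically; and then $c' \le b'+1$, with $c' = b'+1$ excluded because $b'+1-a' \ge M \ge c'-a'$ would be consistent only if... — more carefully, one notes that the path reaches a vertex with primed coordinate $c' \le a'+M \le b'+1-? $; the point is that when $b'+1-a'\ge M$ the first special-primed-coordinate vertex occurs strictly before the path reaches the far edge $x'=b'+1$, so $c'\le b'$. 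Thus in every case at least one of the two symmetric constructions yields a special point that is a \emph{proper} interior coordinate ($\le b$ or $\le b'$ respectively), and we output that one. I expect the genuine obstacle here to be purely bookkeeping: getting all the $-1$/off-by-one boundaries consistent between the $D$-table convention (where index $-1$ denotes the empty prefix, so the grid corner is $(a-1,a'-1)$ rather than $(a,a')$) and the inequalities $a\le c\le b+1$, $c-a<M$; once a single consistent indexing is pinned down, each of the four bullets and the final refinement is a one-line consequence of path monotonicity and the pigeonhole fact that any $M$ consecutive integers contain a multiple of $M$.
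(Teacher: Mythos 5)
Your overall strategy is the same as the paper's: split an optimal alignment (equivalently, the parent path in the $D_{a,a'}$ dynamic programming table) at a vertex on the path whose coordinate equals a nearby special point; additivity at any vertex on an optimal path is indeed automatic. However, your write-up has a genuine gap exactly at the points you yourself flag as ``bookkeeping''. You choose $s$ as the smallest special point in $(a,a+M]$, and then handle the bad case $c=b+1$ by arguing: $c-a\le M$ forces $b-a\le M-1$, ``hence by hypothesis $b'-a'\ge M-1$''. That implication is false: the hypothesis is a disjunction, and $b-a=M-1$ already satisfies it, so nothing forces $b'-a'\ge M-1$. Concretely, take $a$ itself a multiple of $M$, $b=a+M-1$, and $b'-a'$ small with no special point in $[a',b']$: your rule picks $s=a+M=b+1$, violating both the strict bound $c-a<M$ and the final clause $c\le b$, and your symmetric fallback on the primed coordinate is unavailable. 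The companion bound $c'-a'<M$ is likewise only argued up to $\le M$, and your treatment of the final sentence trails off without closing the case analysis.

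The paper sidesteps all of this by a slightly different choice of the splitting point: let $p$ be the smallest special point in $[a,b]$ and $p'$ the smallest in $[a',b']$ (whichever exist; note $p-a,p'-a'<M$ automatically, and $a$ itself is allowed), then walk the optimal DP path forward from the origin $D_{a,a'}[a-1,a'-1]$ and stop at the first step at which one coordinate reaches its chosen point, say $c=p$. Then $c\le b$ and $c-a<M$ are built into the choice of $p$, and the other coordinate satisfies $c'\le p'$ (so $c'-a'<M$) or, if $[a',b']$ contains no special point, $c'\le b'+1\le a'+M-1$. If you replace your half-open interval $(a,a+M]$ by $[a,b]$ and stop at the first coordinate to hit its special point (rather than exiting the box and walking back), your argument becomes the paper's proof and all the boundary cases close.
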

\begin{proof}
  By the assumption, at least one of the intervals $[a,b]$ and $[a',b']$ contains a special point.
  Let $p \in [a,b]$ and $p' \in [a',b']$ be the smallest among them; we have $p-a,p'-a' < M$ provided that $p$ or $p'$ exists, respectively (otherwise $p$ or $p'$ is set to $\infty$).
  Let us consider the table $D_{a,a'}$ and how its cell $D_{a,a'}[b,b']$ is computed.
  We can trace the path of parents in the dynamic programming from $D_{a,a'}[b,b']$ to the origin ($D_{a,a'}[a-1,a'-1]$).
  Let us traverse this path in the reverse direction until the first dimension of the table reaches $p$ or the second dimension reaches $p'$.
  Say that just before this step we are at $D_{a,a'}[q,q']$.
  If $q+1=p$ and $q'<p'$, then we set $c=q+1$ and $c'=q'+1$.
  Indeed $c=p$ is a special point,
  \[ \ed(T[a,b],T[a',b']) = \ed(T[a,c-1],T[a',c'-1]) + \ed(T[c,b],T[c',b']) \]
  and $c-a,c'-a' < M$.
  Moreover, $q' \in [a'-1,b']$, so $c' \in [a',b'+1]$.
  The opposite case (that $q'+1=p'$) is symmetric.
\end{proof}

If $P_k^{\ed}[a,b,a']-a' < M-1$, then it can be computed using one the $M \times M$
prefix fragments of the $D_{a,a'}$ tables.
Otherwise, according to the statement of the lemma, one of the $L_{c,c'}[b]$ lists can be used, where $c-a,c'-a' < M$, as shown 
in Algorithm~\ref{alg:weighted}.
The algorithm uses a predecessor operation $\Pred(x,L)$ which for a number $x$ and a list $L=L_{c,c'}[b]$ returns the maximal pair whose first component does not exceed $x$,
or $(\infty,\infty)$ if no such pair exists.
This operation can be implemented in $\Oh(\log n)$ time via binary search.

\begin{algorithm}[ht!]
$\res:=-1$\;
\If{$b-a<M-1$}{ \label{ll2}
  \For{$b':=a'-1$ \KwSty{to} $a'+M-2$}{
    \If{$D_{a,a'}[b,b'] \le k$}{\label{ll4}
      $\res:=b'$\; \label{ll5}
    }
  }
}
$s:=a+((-a)\bmod M)$; $s':=a'+((-a')\bmod M)$\tcp*{closest special points}\label{ll6}
\ForEach{$(c,c')$ \KwSty{in} $(\{s\} \times [a',a'+M-1]) \cup ([a,a+M-1] \times \{s'\})$\label{ll7}}{
  $(d',b'):=\Pred(k - D_{a,a'}[c-1,c'-1],\,L_{c,c'}[b])$\;\label{ll8}
  \If{$d' \ne \infty$}{
    $\res:=\max(\res,b')$\;\label{ll10}
  }
}
$P_k^{\ed}[a,b,a']:=\res$\;
\caption{Computing $P_k^{\ed}[a,b,a']$.}
\label{alg:weighted}
\end{algorithm}

\begin{theorem}\label{thm:3}
  Let $T$ be a string of length $n$.
  The $k$-coverage of every factor of $T$ under the edit distance can be computed in $\Oh(n^3\sqrt{n\log n})$ time.
\end{theorem}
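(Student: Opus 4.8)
The plan is to assemble the running time from three contributions: the preprocessing cost of building the auxiliary data structures, the cost of filling in the entire table $P_k^{\ed}$ via Algorithm~\ref{alg:weighted}, and the cost of turning $P_k^{\ed}$ into the $k$-coverage values via~\eqref{eq:CovEd}. First I would recall, from the discussion preceding Lemma~\ref{lem:special}, that part~\eqref{it:a} of the preprocessing computes all $L_{a,a'}[b]$ lists in which $a$ or $a'$ is a special point. For a fixed special $a$ (there are $\Oh(n/M)$ of them) and each $a' \in [0,n-1]$ we compute the $D_{a,a'}$ table in $\Oh(n^2)$ time and then run Algorithm~\ref{algo2} for each of the $\Oh(n)$ values of $b$, in linear time each, so the total is $\Oh((n/M) \cdot n \cdot n^2) = \Oh(n^4/M)$; the symmetric case with $a'$ special is the same, and part~\eqref{it:b} is dominated by it. Substituting $M = \floor{\sqrt{n/\log n}}$ gives $\Oh(n^4/M) = \Oh(n^3\sqrt{n\log n})$, matching the claim.

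Next I would bound the cost of Algorithm~\ref{alg:weighted}, which is invoked once for every triple $(a,b,a')$, i.e.\ $\Oh(n^3)$ times. The first block (lines~\ref{ll2}--\ref{ll5}) runs only when $b-a<M-1$ and costs $\Oh(M)$ using cells from part~\eqref{it:b}, contributing $\Oh(n^2 M \cdot n) = \Oh(n^3 M) = o(n^3\sqrt{n\log n})$ overall (and in fact only $\Oh(n^2)$ triples trigger it). The foreach loop (lines~\ref{ll7}--\ref{ll10}) iterates over $\Oh(M)$ pairs $(c,c')$, and for each one performs a single $\Pred$ query on a list $L_{c,c'}[b]$ — which has been precomputed in part~\eqref{it:a} precisely because $c$ or $c'$ is then a special point by the construction of $s,s'$ in line~\ref{ll6} — at a cost of $\Oh(\log n)$ via binary search, plus a lookup of $D_{a,a'}[c-1,c'-1]$, which lies in the $M\times M$ prefix fragment stored in part~\eqref{it:b}. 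Hence each invocation costs $\Oh(M\log n)$, for a total of $\Oh(n^3 M\log n) = \Oh(n^3\sqrt{n\log n}\cdot\sqrt{\log n}) $. To avoid an extra $\sqrt{\log n}$ factor I would instead observe that the $\Oh(M)$ $\Pred$ queries issued for a fixed $b$ and varying $c$ (or $c'$) along a single special diagonal can be answered in bulk: the query arguments $k - D_{a,a'}[c-1,c'-1]$ are monotone along the parent path of the dynamic programming, so after sorting we spend $\Oh(M + \log n)$ instead of $\Oh(M\log n)$; alternatively one simply notes $M\log n = \Oh(\sqrt{n\log n})$ is false, so the cleaner route is to fold the $\log n$ into $M$ by choosing $M=\floor{\sqrt{n/\log n}}$ so that $n^3 M \log n = n^3 \sqrt{n\log n}$ — indeed $M\log n = \sqrt{n/\log n}\cdot\log n = \sqrt{n\log n}$, so the bound is exactly $\Oh(n^3\sqrt{n\log n})$ with no slack to shave. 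That is the point of the specific choice of $M$.

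For correctness I would argue that Algorithm~\ref{alg:weighted} correctly computes $P_k^{\ed}[a,b,a']$, the maximum $b'$ with $\ed(T[a,b],T[a',b'])\le k$. If the optimal $b'$ satisfies $b'-a'<M-1$ and also $b-a<M-1$, it is found by the direct scan over the $M\times M$ prefix fragment. Otherwise $b-a\ge M-1$ or $b'-a'\ge M-1$, so Lemma~\ref{lem:special} applies to the pair $(b,b')$: there is a split point $(c,c')$ with $c-a,c'-a'<M$, at least one of $c,c'$ special, and $\ed(T[a,b],T[a',b']) = \ed(T[a,c-1],T[a',c'-1]) + \ed(T[c,b],T[c',b'])$. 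The pair $(c,c')$ lies in the set $(\{s\}\times[a',a'+M-1]) \cup ([a,a+M-1]\times\{s'\})$ scanned by the foreach loop, the term $\ed(T[a,c-1],T[a',c'-1]) = D_{a,a'}[c-1,c'-1]$ is read from part~\eqref{it:b}, and then $\Pred(k-D_{a,a'}[c-1,c'-1],L_{c,c'}[b])$ returns the maximal $b'$ with $\ed(T[c,b],T[c',b'])\le k-D_{a,a'}[c-1,c'-1]$, because $L_{c,c'}[b]$ stores exactly the Pareto frontier of $(\ed(T[c,b],T[c',b']),b')$ pairs. Conversely, for any $(c,c')$ in that set the value produced is a lower bound on the true maximum by the triangle inequality (subadditivity of edit distance), so taking the maximum over all scanned $(c,c')$ together with the direct-scan candidate yields exactly $P_k^{\ed}[a,b,a']$.

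Finally, with the whole table $P_k^{\ed}$ in hand, for each of the $\Oh(n^2)$ factors $T[a,b]$ we evaluate~\eqref{eq:CovEd}, which is the size of a union of $n$ intervals $[a',P_k^{\ed}[a,b,a']]$; this is computable in $\Oh(n)$ time per factor (sort by left endpoint is unnecessary since $a'$ already ranges in order, so a single sweep suffices), giving $\Oh(n^3)$ in total. Summing the three parts, the dominant term is the $\Oh(n^3\sqrt{n\log n})$ preprocessing and table-filling cost, which proves the theorem. The main obstacle, and the only genuinely delicate point, is verifying that the split point $(c,c')$ guaranteed by Lemma~\ref{lem:special} always falls inside the $\Oh(M)$-sized scanned region anchored at the nearest special points $s,s'$ — this is why the lemma is phrased with the bounds $c-a,c'-a'<M$ and with at least one coordinate special, and why line~\ref{ll6} picks $s = a+((-a)\bmod M)$ and $s' = a'+((-a')\bmod M)$; one must check that the special point among $c,c'$ is indeed the nearest special point to the left, which follows from minimality of $p$ (resp.\ $p'$) in the proof of the lemma.
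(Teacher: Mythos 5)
Your proposal is correct and follows essentially the same route as the paper: precompute the $L_{c,c'}[b]$ lists anchored at special points and the small prefix fragments of the $D$-tables in $\Oh(n^4/M)=\Oh(n^3\sqrt{n\log n})$ time, evaluate Algorithm~\ref{alg:weighted} for each triple in $\Oh(M\log n)=\Oh(\sqrt{n\log n})$ time with correctness resting on Lemma~\ref{lem:special}, the Pareto frontier stored in $L_{c,c'}[b]$, and subadditivity of the edit distance, and finish with the $\Oh(n)$-per-factor interval union of Equation~\eqref{eq:CovEd}. The only blemish is the detour about an alleged extra $\sqrt{\log n}$ factor: there is none, since $M\log n=\sqrt{n/\log n}\cdot\log n=\sqrt{n\log n}$ exactly, as you yourself conclude in the end, so the bulk-query workaround (whose monotonicity claim is unjustified) is unnecessary.
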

\begin{proof}
  We want to show that Algorithm~\ref{alg:weighted} correctly computes $P_k^{\ed}[a,b,a']$.
  Let us first check that the result $b'=\res$ of Algorithm~\ref{alg:weighted} satisfies $D_{a,a'}[b,b'] \le k$.
  It is clear if the algorithm computes $b'$ in line~\ref{ll5}.
  Otherwise, it is computed in line~\ref{ll10}.
  This means that $L_{c,c'}[b]$ contains a pair $(D_{c,c'}[b,b'],b')$ such that
  \[ k\ \ge\ D_{c,c'}[b,b'] + D_{a,a'}[c-1,c'-1]\ \ge\ D_{a,a'}[b,b']. \]

  Now we show that the returned value $\res$ is at least $x=P_k^{\ed}[a,b,a']$.
  If $b-a < M-1$ and $x-a'<M-1$, then the condition in line~\ref{ll4} holds for $b'=x$, so indeed $\res \ge x$.
  Otherwise, the condition of Lemma~\ref{lem:special} is satisfied.
  The lemma implies two positions $c,c'$ such that at least one of them is special and that satisfy additional constraints.

  If $c$ is special, then the constraints $a \le c$ and $c-a<M$ imply that $c=s$, as defined in line~\ref{ll6}.
  Additionally, $a' \le c' \le a'+M-1$, so $(c,c')$ will be considered in the loop from line~\ref{ll7}.
  By the lemma and the definition of $x$, we have
  \begin{equation}\label{eq:dddd} D_{c,c'}[b,x]\ =\ D_{a,a'}[b,x] - D_{a,a'}[c-1,c'-1]\ \le\ k - D_{a,a'}[c-1,c'-1].\end{equation}
  The list $L_{c,c'}[b]$ either contains the pair $(D_{c,c'}[b,x],x)$,
  or a pair $(D_{c,c'}[b,x'],x')$ such that $D_{c,c'}[b,x'] \le D_{c,c'}[b,x]$ and $x'>x$.
  In the latter case by~\eqref{eq:dddd} we would have
  \[k\ \ge\ D_{a,a'}[c-1,c'-1] + D_{c,c'}[b,x]\ \ge\ D_{a,a'}[c-1,c'-1] + D_{c,c'}[b,x']\ \ge\ D_{a,a'}[b,x'] \]
  and $x'>x$.
  In both cases the predecessor computed in line~\ref{ll8} returns a value $\res$ such that $\res \ge x$ and $\res \ne \infty$.
  The case that $c'$ is special admits an analogous argument.

  Combining Algorithm~\ref{alg:weighted} with Equation~\eqref{eq:CovEd}, we obtain correctness of the computation.

  As for complexity,
  Algorithm~\ref{alg:weighted} computes $P_k^{\ed}[a,b,a']$ in $\Oh(M \log n) = \Oh(\sqrt{n\log n})$ time and
  the pre-computations take $\Oh(n^3\sqrt{n\log n})$ total time.
\end{proof}

\subsection{Restricted Approximate Covers and Seeds under Edit Distance}\label{subsec:Costas}
The techniques that were developed in Section~\ref{subsec:ed} can be used to improve upon the $\Oh(n^4)$ time complexity
of the algorithms for computing the restricted approximate covers and seeds under the edit distance~\cite{DBLP:journals/jalc/ChristodoulakisIPS05,SimParkKimLee}.
We describe our solution only for restricted approximate covers; the solution for restricted approximate seeds follows by considering the text $\diamondsuit^{|T|} T \diamondsuit^{|T|}$.

Let us first note that the techniques from the previous subsection can be used as a black box to solve the problem in scope in $\Oh(n^3\sqrt{n\log n} \log (nw))$ time,
where $w$ is the maximum cost of an edit operation.
Indeed, for every factor $T[a,b]$ we binary search for the smallest $k$ for which $T[a,b]$ is a $k$-approximate cover of $T$.
A given value $k$ is tested by computing the tables $P_k^{\ed}[a,b,a']$ for all $a'=0,\ldots,n-1$ and checking if $\Covered_k^d(T[a,b],T)=n$ using Equation~\eqref{eq:CovEd}.

Now we proceed to a more efficient solution.
Same as in the algorithms from~\cite{DBLP:journals/jalc/ChristodoulakisIPS05,SimParkKimLee} we compute,
for every factor $T[a,b]$, a table $Q_{a,b}[0..n]$ such that $Q_{a,b}[i]$ is the minimum edit distance threshold $k$
for which $T[a,b]$ is a $k$-approximate cover of $T[i,n-1]$.
In the end, all factors $T[a,b]$ for which $Q_{a,b}[0]$ is minimal need to be reported as restricted approximate covers of $T$.
We will show how, given the data structures \eqref{it:a} and \eqref{it:b} of the previous section, we can compute this table in $\Oh(n M \log n)$ time.

A dynamic programming algorithm for computing the $Q_{a,b}$ table,
similar to the one in~\cite{DBLP:journals/jalc/ChristodoulakisIPS05}, is shown in Algorithm~\ref{alg:n2}.
Computing $Q_{a,b}$ takes $\Oh(n^2)$ time provided that all $D_{a,b}$ arrays, of total size $\Oh(n^4)$, are available.
The algorithm considers all possibilities for the approximate occurrence $T[i,j]$ of $T[a,b]$.

\newcommand{\minQ}{\mathit{minQ}}

\begin{algorithm}[ht!]
$Q_{a,b}[n]:=0$\;
\For{$i:=n-1$ \KwSty{down to} $0$}{
  $Q_{a,b}[i]:=\infty$\;
  $\minQ := \infty$\;
  \For{$j:=i$ \KwSty{to} $n-1$}{
    $\minQ := \min(\minQ,Q_{a,b}[j+1])$\tcp*{$\minQ = \min Q_{a,b}[i+1..j+1]$}
    $Q_{a,b}[i]:=\min(Q_{a,b}[i],\,\max(D_{a,i}[b,j],\minQ))$\; \label{l7}
  }
}
\caption{Computing $Q_{a,b}$ in quadratic time.}
\label{alg:n2}
\end{algorithm}

During the computation of $Q_{a,b}$, we will compute a data structure for on-line range-minimum queries over the table.
We can use the following simple data structure with $\Oh(n \log n)$ total construction time and $\Oh(1)$-time queries.
For every position $i$ and power of two $2^p$, we store as $\RM[i,p]$ the minimal value in the table $Q_{a,b}$ on the interval $[i,i+2^p-1]$. 
When a new value $Q_{a,b}[i]$ is computed, we compute
$\RM[i,0]=Q_{a,b}[i]$ and
$\RM[i,p]$ for all $0 < p \le \log_2 (n-i)$ using the formula $\RM[i,p] = \min(\RM[i,p-1],\RM[i+2^{p-1},p-1])$.
Then a range-minimum query over an interval $[i,j]$ of $Q_{a,b}$ can be answered by inspecting up to two cells of the $\RM$ table for $p$ such that
$2^p \le j-i+1 < 2^{p+1}$.

Let us note that the variable $\minQ$, which denotes the minimum of a growing segment in the $Q_{a,b}$ table, can only decrease.
We would like to make the second argument of $\max$ in line~\ref{l7} non-decreasing for increasing $j$.
The values $\ed(T[a,b],T[i,j]) = D_{a,i}[b,j]$ may increase or decrease as $j$ grows.
However, it is sufficient to consider only those values of $j$ for which $(D_{a,i}[b,j],j)$ is not (Pareto-)dominated (as in Section~\ref{subsec:ed}),
i.e., the elements of the list $L_{a,i}[b]$.
For these values, $D_{a,i}[b,j]$ is indeed increasing for increasing $j$.
The next observation follows from this monotonicity and the monotonicity of $\min Q_{a,b}[i+1..j+1]$.

\begin{observation}
Let $(D_{a,i}[b,j'],j')$ be the first element on the list $L_{a,i}[b]$ such that
\[\min Q_{a,b}[i+1..j'+1] \le D_{a,i}[b,j'].\]
If $j'$ does not exist, we simply take the last element of $L_{a,i}[b]$.
Further let $(D_{a,i}[b,j''],j'')$ be the predecessor of $(D_{a,i}[b,j'],j')$ in $L_{a,i}[b]$ (if it exists).
Then $j \in \{j',j''\}$ minimizes the value of the expression $\max(\min Q_{a,b}[i+1..j+1], D_{a,i}[b,j])$.
\end{observation}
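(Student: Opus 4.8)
The plan is to prove the Observation by exploiting the two monotonicities noted just before it: along the list $L_{a,i}[b]$ the value $D_{a,i}[b,j]$ strictly increases with $j$ (since only Pareto-maximal pairs survive), while $\min Q_{a,b}[i+1\dd j+1]$ is non-increasing in $j$ (it is the minimum of a widening suffix-prefix window). Consider the function $f(j) = \max(\min Q_{a,b}[i+1\dd j+1], D_{a,i}[b,j])$ restricted to the values of $j$ that appear on $L_{a,i}[b]$; we want to show its minimum is attained at $j'$ or at its list-predecessor $j''$. First I would observe that the inner-$\min$ term is a step function that can only drop, and $D_{a,i}[b,j]$ is a step function that can only rise, as we walk rightward along $L_{a,i}[b]$. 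Hence before the crossover point $j'$ we have $\min Q_{a,b}[i+1\dd j+1] > D_{a,i}[b,j]$, so $f(j) = \min Q_{a,b}[i+1\dd j+1]$ there, which is non-increasing; thus on the prefix of the list ending at $j''$ the value $f$ is minimized at $j''$, with $f(j'') = \min Q_{a,b}[i+1\dd j''+1]$. From $j'$ onward we have $\min Q_{a,b}[i+1\dd j+1] \le D_{a,i}[b,j]$, so $f(j) = D_{a,i}[b,j]$, which is non-decreasing; thus on the suffix of the list starting at $j'$ the value $f$ is minimized at $j'$, with $f(j') = D_{a,i}[b,j']$. Therefore the global minimum of $f$ over $L_{a,i}[b]$ is $\min(f(j''), f(j'))$, i.e.\ it is attained at $j''$ or $j'$.

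The next ingredient is to argue that it suffices to range over $L_{a,i}[b]$ rather than over all $j \in [i,n-1]$, which is exactly the justification of line~\ref{l7}: if $j$ is dominated by some $\hat{\jmath} > j$ on the list, then $D_{a,i}[b,\hat{\jmath}] \le D_{a,i}[b,j]$ and $\min Q_{a,b}[i+1\dd\hat{\jmath}+1] \le \min Q_{a,b}[i+1\dd j+1]$ (the window only grows), so $f(\hat{\jmath}) \le f(j)$, and $\hat{\jmath}$ dominates $j$ also as a candidate for the outer $\min$ in $Q_{a,b}[i]$; and if $j$ is dominated by some $\hat{\jmath} < j$ on the list, the same inequality on $D$ holds, while the window term for $\hat{\jmath}$ is no larger, so again $f(\hat{\jmath}) \le f(j)$. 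In either case the non-Pareto-maximal indices can be skipped, so restricting attention to $L_{a,i}[b]$ does not change the value of $Q_{a,b}[i]$ computed in the inner loop.

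Putting these together with the complexity bookkeeping: for fixed $a$ and $b$ and for each $i$ from $n-1$ down to $0$, one locates $j'$ on $L_{a,i}[b]$ by a binary search using the on-line $\RM$ structure over $Q_{a,b}$ (each range-minimum query in $\Oh(1)$ time, $\Oh(\log n)$ queries per search), evaluates $f$ at $j'$ and $j''$, and updates $Q_{a,b}[i]$ and the $\RM$ table in $\Oh(\log n)$ time; this gives $\Oh(n\log n)$ time per $(a,b)$ pair once the relevant $L_{a,i}[b]$ lists are available. The lists $L_{c,c'}[b]$ with $c$ or $c'$ a special point, together with the $M\times M$ prefix cells of the $D_{c,c'}$ tables, were precomputed in $\Oh(n^3\sqrt{n\log n})$ time in Section~\ref{subsec:ed}; the remaining lists $L_{a,i}[b]$ needed here are obtained on the fly by the same stack-based Algorithm~\ref{algo2} from the cells $D_{a,i}[b,b']$, but those cells are not all stored. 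Here one invokes Lemma~\ref{lem:special}: for $b' - i \ge M-1$ the pair $(D_{a,i}[b,b'],b')$ — and hence the Pareto frontier that forms $L_{a,i}[b]$ beyond the first $M$ entries — is recoverable from a list $L_{c,c'}[b]$ with $c$ or $c'$ special, via the decomposition $D_{a,i}[b,b'] = D_{a,i}[c-1,c'-1] + D_{c,c'}[b,b']$, exactly as in Algorithm~\ref{alg:weighted}; the first $M$ entries come from the stored $M\times M$ prefix block of $D_{a,i}$. Summing over all $\Oh(n^2)$ factors $T[a,b]$ gives $\Oh(n^3\log n \cdot M) = \Oh(n^3\sqrt{n\log n})$ time for this phase, dominated by the precomputation, which yields the claimed bound.

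The step I expect to be the main obstacle is the last one: cleanly showing that the lists $L_{a,i}[b]$ needed in Algorithm~\ref{alg:n2} can be reconstructed within the budget using only the special-point data structures \eqref{it:a}--\eqref{it:b}, i.e.\ that one does not secretly need the full $\Oh(n^4)$-size collection of $D_{a,i}$ tables. This requires arguing that for each query one needs only a \emph{predecessor} into such a list (as in line~\ref{ll8} of Algorithm~\ref{alg:weighted}), never the list materialized in full, so that the additive shift $D_{a,i}[c-1,c'-1]$ — read from the stored $M\times M$ block — together with a $\Pred$ query on the stored $L_{c,c'}[b]$ suffices; combined with the monotonicity argument above, this keeps the per-query cost at $\Oh(M\log n)$ and avoids any blow-up in space or time.
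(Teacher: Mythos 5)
Your crossover argument is exactly the paper's own justification: the observation is stated there as following from the monotonicity of $D_{a,i}[b,j]$ along the list $L_{a,i}[b]$ and the monotonicity of $\min Q_{a,b}[i+1\dd j+1]$, which is precisely what your first two paragraphs spell out (your sub-case ``dominated by some $\hat\jmath<j$'' is vacuous under the paper's definition of Pareto-domination, so the shaky reasoning there is harmless). The remaining material on reconstructing the lists from the special-point structures concerns the complexity of Algorithm~\ref{alg:fullps}, not the observation itself, and agrees with the paper's treatment.
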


If we had access to the list $L_{a,i}[b]$, we could use binary search to locate the index $j'$ defined in the observation.
However, we only store the lists $L_{a,i}[b]$ for $a$ and $i$ such that at least one of them is a special point.
We can cope with this issue by separately considering all $j$ such that $j < i + M-1$ and then performing binary search on every
of $\Oh(M)$ lists $L_{c,c'}[b]$ where $a \le c < a+M$, $i \le c' < i+M$ and at least one of $c$, $c'$ is a special point,
just as in Algorithm~\ref{alg:weighted}.
A pseudocode of the resulting algorithm is given as Algorithm~\ref{alg:fullps}.

\begin{algorithm}[ht!]
$Q_{a,b}[n]:=0$\;
\For{$i:=n-1$ \KwSty{down to} $0$}{
  $Q_{a,b}[i]:=\infty$\;
  $\minQ := \infty$\;
  \If{$b-a<M-1$}{
    \For{$j:=i$ \KwSty{to} $i+M-2$}{
      $\minQ := \min(\minQ,Q_{a,b}[j+1])$\;
      $Q_{a,b}[i]:=\min(Q_{a,b}[i],\,\max(D_{a,i}[b,j],\minQ))$\;
    }
  }
  $s:=a+((-a)\bmod M)$; $s':=i+((-i)\bmod M)$\;
  \ForEach{$(c,c')$ \KwSty{in} $(\{s\} \times [i,i+M-1]) \cup ([a,a+M-1] \times \{s'\})$}{
    \lIf{$L_{c,c'}[b]$ is empty}{\KwSty{continue}}
    \tcc{Binary search}
    $(d_{j'},j'):=\,$ the first pair in $L_{c,c'}[b]$ such that $\min Q_{a,b}[i+1..j'+1] \le D_{a,i}[c-1,c'-1]+d_{j'}$ or the last pair\;
    $(d_{j''},j''):=\,$predecessor of $(d_{j'},j')$ in $L_{c,c'}[b]$ or $(d_{j'},j')$ if there is none\;
    \ForEach{$j$ \KwSty{in} $\{j',j''\}$}{
      $Q_{a,b}[i]:=\min(Q_{a,b}[i],\,\max(D_{a,i}[c-1,c'-1]+d_j, \min Q_{a,b}[i+1..j+1]))$\;
    }
  }
}
\caption{Computing $Q_{a,b}$ in $\Oh(n\sqrt{n \log n})$ time using pre-computed data structures.}
\label{alg:fullps}
\end{algorithm}

Let us summarize the complexity of the algorithm.
Pre-computation of auxiliary data structures requires $\Oh(n^3\sqrt{n\log n})$ time.
Then for every factor $T[a,b]$ we compute the table $Q_{a,b}$.
The data structure for constant-time range-minimum queries over the table costs only additional $\Oh(n \log n)$ space and computation time.
When computing $Q_{a,b}[i]$ using dynamic programming, we may separately consider first $M-1$ indices $j$,
and then we perform a binary search in $\Oh(M)$ lists $L_{c,c'}[b]$.
In total, the time to compute $Q_{a,b}[i]$ given $a$, $b$, $i$ is $\Oh(M \log n) = \Oh(\sqrt{n \log n})$.

\begin{theorem}
  Let $T$ be a string of length $n$.
  All restricted approximate covers and seeds of~$T$ under the edit distance can be computed in $\Oh(n^3\sqrt{n\log n})$ time.
\end{theorem}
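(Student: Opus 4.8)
The plan is to show that the procedure assembled in this subsection — pre-compute the data structures \eqref{it:a} and \eqref{it:b}, run Algorithm~\ref{alg:fullps} for every factor $T[a,b]$, and from the resulting tables $Q_{a,b}$ read off, for each factor, the value $Q_{a,b}[0]$ — is correct and runs in $\Oh(n^3\sqrt{n\log n})$ time. The time bound is then a routine summation, so the real work is proving that Algorithm~\ref{alg:fullps} computes $Q_{a,b}$ correctly while inspecting only $\Oh(M)$ candidate occurrences per cell.

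First I would recall why the quadratic-time Algorithm~\ref{alg:n2} is correct, by induction on decreasing $i$. Here $Q_{a,b}[i]$ is the least $k$ for which $T[a,b]$ is a $k$-approximate cover of $T[i,n-1]$; the recurrence in line~\ref{l7} is the standard one used in~\cite{DBLP:journals/jalc/ChristodoulakisIPS05}: the leftmost approximate occurrence covering position $i$ is necessarily a left-anchored interval $[i,j]$ and contributes cost $\ed(T[a,b],T[i,j]) = D_{a,i}[b,j]$, after which the still-uncovered suffix may begin at any position of $[i+1,j+1]$ because overlaps of consecutive occurrences are allowed, which accounts for the term $\min Q_{a,b}[i+1..j+1]$; the threshold for the whole of $T[i,n-1]$ is the maximum of the two, and one minimises over $j$. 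Consequently $Q_{a,b}[0]$ is the smallest $k$ for which $T[a,b]$ is a $k$-approximate cover of $T$, which is exactly the output required by the \textsc{Restricted Approximate Covers} problem (and the factors minimising $Q_{a,b}[0]$ are the restricted approximate covers of $T$).

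Next I would prove that Algorithm~\ref{alg:fullps} evaluates the same recurrence. Fix $a,b,i$ and note that $\min Q_{a,b}[i+1..j+1]$ is non-increasing in $j$, evaluated in $\Oh(1)$ time via the on-line $\RM$ structure built over $Q_{a,b}$. Occurrences with $j-i<M-1$ can matter only when $b-a<M-1$ (when $b-a\ge M-1$, every $j\ge i$ already satisfies the hypothesis of Lemma~\ref{lem:special}), and for such short occurrences the needed cells $D_{a,i}[b,j]$ are precisely those kept in~\eqref{it:b}; this is the explicit loop. For the remaining case — $b-a\ge M-1$ or $j-i\ge M-1$ — Lemma~\ref{lem:special}, applied with $T[a,b]$ and $T[i,j]$ in the roles of $T[a,b]$ and $T[a',b']$, yields positions $c,c'$, at least one special, with $a\le c$, $i\le c'$, $c-a,c'-i<M$ and $\ed(T[a,b],T[i,j]) = D_{a,i}[c-1,c'-1] + D_{c,c'}[b,j]$; these constraints force $c=s$ or $c'=s'$, so $(c,c')$ is among the enumerated pairs and $L_{c,c'}[b]$ is a list actually stored by~\eqref{it:a}. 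Restricting $j$ to the Pareto frontier $L_{c,c'}[b]$ loses nothing, since a dominated pair is never strictly better than the dominating one (no larger distance, larger second coordinate) — exactly as in the proof of Theorem~\ref{thm:3}. Along $L_{c,c'}[b]$ the quantity $D_{a,i}[c-1,c'-1]+d_j$ is non-decreasing while $\min Q_{a,b}[i+1..j+1]$ is non-increasing, so the minimum of their pointwise maximum over the frontier is attained at the crossover pair $(d_{j'},j')$ or its predecessor $(d_{j''},j'')$; these are the two pairs the algorithm isolates by binary search, so the computed value is correct. The seed case follows by running the whole algorithm on $\diamondsuit^{n}T\diamondsuit^{n}$, where $\diamondsuit$ matches every letter at zero substitution cost, with $C$ still ranging over the $\Oh(n^2)$ factors of $T$.

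Finally I would collect the running time. With $M=\floor{\sqrt{n/\log n}}$ we have $M\log n=\Theta(\sqrt{n\log n})$, so pre-computing~\eqref{it:a} and~\eqref{it:b} costs $\Oh(n^4/M+n^4/M^2)=\Oh(n^3\sqrt{n\log n})$. For each of the $\Oh(n^2)$ factors $T[a,b]$, building the $\RM$ structure over $Q_{a,b}$ takes $\Oh(n\log n)$ total time, and each $Q_{a,b}[i]$ is computed in $\Oh(M)$ time for the direct loop plus $\Oh(M\log n)$ time for the $\Oh(M)$ binary searches (each range-minimum answered in $\Oh(1)$), i.e.\ $\Oh(M\log n)$; summing over $i$ gives $\Oh(nM\log n)=\Oh(n\sqrt{n\log n})$ per factor and $\Oh(n^3\sqrt{n\log n})$ over all factors, which also absorbs the pre-computation, and the padded string of length $3n$ changes nothing asymptotically. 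The main obstacle is the correctness argument of Algorithm~\ref{alg:fullps} outlined above: it must simultaneously use the special-point decomposition of the edit distance of the occurrence (Lemma~\ref{lem:special}) and the monotonicity of $D_{c,c'}[b,\cdot]$ along a Pareto frontier played off against the non-monotone table $Q_{a,b}$ accessed through range-minima — this is where the two ingredients developed earlier must be merged with care; the rest is bookkeeping.
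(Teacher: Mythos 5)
Your proposal is correct and follows essentially the same route as the paper: correctness of the quadratic recurrence for $Q_{a,b}$ (Algorithm~\ref{alg:n2}), its acceleration via Lemma~\ref{lem:special}, the Pareto-frontier lists $L_{c,c'}[b]$, the on-line range-minimum structure and binary search for the crossover pair (Algorithm~\ref{alg:fullps}), the $\diamondsuit^{|T|}T\diamondsuit^{|T|}$ reduction for seeds, and the same $\Oh(n^3\sqrt{n\log n})$ accounting. No substantive deviation or gap; if anything, you spell out the correctness of the frontier restriction and the crossover argument in slightly more detail than the paper does.
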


The work of~\cite{DBLP:journals/jalc/ChristodoulakisIPS05,SimParkKimLee} on approximate covers and seeds originates from a study of approximate periods~\cite{DBLP:journals/tcs/SimIPS01}.
Interestingly, while our algorithm improves upon the algorithms for computing approximate covers and seeds, it does not work for approximate periods.

\section{NP-hardness of General Hamming $k$-Approximate Cover and Seed}\label{sec:NPhard}

We make a reduction from the following problem.

\defproblem{Hamming String Consensus}{
  Strings $S_1,\ldots,S_m$, each of length $\ell$, and an integer $k \le \ell$
}{
  A string $S$, called consensus string, such that $\Ham(S,S_i) \le k$ for all $i=1,\ldots,m$
}

The following fact is known.

\begin{fact}[\cite{DBLP:journals/mst/FrancesL97}]\label{fct:Frances}
  \textsc{Hamming String Consensus} is NP-complete even for the binary alphabet.
\end{fact}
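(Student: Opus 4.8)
The plan is to establish both membership in NP and NP-hardness, the latter by a polynomial reduction from \textsc{3-Sat}; throughout I treat the decision version (does a consensus string within threshold $k$ exist), which is what \textsc{Hamming String Consensus} with $k$ given as input asks. Membership is immediate: a candidate consensus $S$ is a certificate of length $\ell$, and checking $\Ham(S,S_i) \le k$ for every $i$ takes $\Oh(m\ell)$ time. So the whole argument reduces to the hardness direction, and the real difficulty is to carry it out over a \emph{binary} alphabet, since over large alphabets a single fresh symbol per coordinate makes the encoding nearly trivial; I would reconstruct a reduction of the Frances--Litman type that respects $\Sigma=\{0,1\}$.

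For the reduction I would start from a 3-CNF formula $\phi$ with variables $x_1,\ldots,x_n$ and clauses $C_1,\ldots,C_m$, and build an instance in which a designated block of coordinates encodes a truth assignment, with $S$ restricted to that block read as the assignment ($1$ meaning \emph{true}). The core gadget is a per-clause string: for a clause $C_j$ over the literals of $x_a,x_b,x_c$, I place $S_j$ so that, read on those three coordinates, its local pattern is the unique \emph{all-literals-true} pattern. Among the eight patterns of a three-coordinate Boolean cube, exactly one lies at Hamming distance $3$ from this pattern, namely its complement --- which is precisely the unique local assignment falsifying $C_j$ --- while the other seven lie at distance at most $2$. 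Calibrating the global threshold so that each clause is allotted a local budget of $2$ then forces $S$ to avoid the falsifying pattern on $\{a,b,c\}$, i.e.\ forces $C_j$ to be satisfied. Enforcer strings pin $S$ down on all remaining coordinates and equalise the baseline distances, so that the only degrees of freedom left to the consensus are the truth values themselves.

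With the gadget in place, correctness follows the usual two directions: a satisfying assignment yields a consensus within distance $k$ of every $S_i$ (each clause string loses at most its allotted local budget, and all enforcers are matched exactly), and conversely any consensus within $k$ must match the enforcers, hence encodes a genuine assignment that, by the clause gadget, satisfies every $C_j$. I expect the main obstacle to be arranging a \emph{single} global threshold $k$ that enforces all per-clause local constraints simultaneously, because the variable coordinates are shared among clauses and a naive sharing lets the unpredictable mismatches on the ``other'' coordinates swamp the budget. The remedy I would adopt is standard: give each clause-occurrence of a variable its own private coordinates, add consistency enforcers (with padding coordinates) that tie these copies to a common truth value and equalise every baseline distance, and thereby localise each clause constraint to its own coordinates so that the budget bookkeeping becomes exact. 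Checking that the number of strings, the length $\ell$, and the threshold $k$ all stay polynomial, and that the entire construction survives the restriction to $\{0,1\}$, is the routine but delicate part that completes the reduction.
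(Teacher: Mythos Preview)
The paper does not prove this statement; it records it as a \emph{Fact} with a citation to Frances and Litman and then uses it as a black box in the reductions of Section~\ref{sec:NPhard}. So there is no ``paper's own proof'' to compare against --- citing the known result is the intended and sufficient treatment here.

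Your write-up is therefore a reconstruction rather than a comparison. As a high-level sketch of a \textsc{3-Sat} reduction it is plausible: the clause gadget (the all-literals-true pattern sits at Hamming distance $3$ from the unique falsifying assignment and at most $2$ from the satisfying ones) and the plan to duplicate variable occurrences and add enforcer/consistency strings are the right shape of idea. But it is only a sketch. You never actually write down the enforcer or consistency strings, you do not compute the threshold $k$, and the step you yourself flag as ``routine but delicate'' --- equalising all baseline distances so that a single global $k$ enforces every local budget simultaneously over $\{0,1\}$ --- is exactly where such constructions require concrete work and can fail. If a full proof were required, those details would have to be supplied and verified; for the purposes of this paper, the citation suffices.
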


Let strings $S_1,\ldots,S_m$ of length $\ell$ over the alphabet $\Sigma=\{0,1\}$ and integer $k$ be an instance of \textsc{Hamming String Consensus}.
We introduce a morphism $\phi$ such that
\[\phi(0)=0^{2k+4}\,1010\,0^{2k+4},\quad \phi(1)=0^{2k+4}\,1011\,0^{2k+4}.\]
We will exploit the following simple property of this morphism.
\begin{observation}\label{obs:phi}
  For every string $S$, every length-$(2k+4)$ factor of $\phi(S)$ contains at most three ones.
\end{observation}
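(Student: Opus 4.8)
The plan is a direct counting argument that exploits the rigid block structure of $\phi$. First I would record what a single block looks like: both $\phi(0)$ and $\phi(1)$ have length $4k+12$, and within each block every occurrence of $1$ lies in the four-letter \emph{core} occupying block-offsets $2k+4,\ldots,2k+7$; this core is $1010$ for $\phi(0)$ and $1011$ for $\phi(1)$. Thus each block contributes at most three ones, all packed into four consecutive positions, and everything outside the cores is $0$.

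The second step is to quantify how far apart the cores of different blocks sit inside $\phi(S)$. If a block of $\phi(S)$ starts at global position $B$, its core occupies positions $[B+2k+4,\,B+2k+7]$ while the next block's core occupies $[B+6k+16,\,B+6k+19]$; between them lie the trailing $0^{2k+4}$ of the first block and the leading $0^{2k+4}$ of the second, a run of $4k+8$ zeros. Hence any two ones belonging to the cores of two distinct blocks are at least $(B+6k+16)-(B+2k+7)=4k+9$ positions apart (and even farther apart for non-adjacent blocks).

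Finally I would conclude: a factor of $\phi(S)$ of length $2k+4$ occupies positions $[j,\,j+2k+3]$, so any two of its positions differ by at most $2k+3$. Since $4k+9>2k+3$ for every $k\ge 0$, such a factor cannot meet the cores of two distinct blocks; it therefore meets at most one core, and outside cores all letters are $0$. Hence it contains at most three ones, which is exactly the claim. I do not expect a genuine obstacle here: the only substance is pinning down the exact offsets of the ones inside a block and verifying the separation inequality $4k+9>2k+3$, and the $2k+4$ padding in the definition of $\phi$ was evidently chosen precisely so that this inequality holds with room to spare.
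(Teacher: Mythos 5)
Your argument is correct: you pin down that all ones of $\phi(0)$ and $\phi(1)$ lie in the four-letter core at block offsets $2k+4,\ldots,2k+7$, that consecutive cores in $\phi(S)$ are separated by a run of $4k+8$ zeros, and that a window of length $2k+4$ (spanning positions differing by at most $2k+3<4k+9$) can therefore meet at most one core, which carries at most three ones. The paper states this observation without proof as an immediate consequence of the definition of $\phi$, and your counting argument is exactly the intended verification, so there is nothing to add.
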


Let
$\gamma_i = 1^{2k+4}\phi(S_i)$
and let $\psi(U)$ be an operation that reverses this encoding, i.e., $\psi(\gamma_i) = S_i$.
Formally, it takes as input a string $U$ and outputs $U[4k+12-1]U[2\cdot(4k+12)-1]\ldots U[(\ell-1)(4k+12)-1]$.

\begin{lemma}\label{lem:NP_cover}
  Strings $\gamma_i$ and $\gamma_j$, for any $i,j \in \{1,\ldots,m\}$, have no $2k$-mismatch prefix-suffix of length $p \in \{2k+4,\ldots,|\gamma_i|-1\}$.
\end{lemma}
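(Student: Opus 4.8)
The plan is to exploit the block structure of $\gamma_i = 1^{2k+4}\phi(S_i)$ and the sparsity of ones guaranteed by Observation~\ref{obs:phi}. Recall $\gamma_i$ begins with a block $B = 1^{2k+4}$ of $2k+4$ ones, and the rest of $\gamma_i$ is $\phi(S_i)$, in which (by Observation~\ref{obs:phi}) every window of length $2k+4$ contains at most three ones. Suppose for contradiction that $\gamma_i$ has a prefix $U'$ of length $p$ and $\gamma_j$ has a suffix $V'$ of length $p$ with $\Ham(U',V') \le 2k$, where $2k+4 \le p \le |\gamma_i|-1$.

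First I would count ones on the $U'$ side. Since $p \ge 2k+4$, the prefix $U'$ contains the whole leading block $B$, hence $U'$ has at least $2k+4$ ones coming just from its first $2k+4$ positions. Next I would bound the number of ones on the $V'$ side, i.e.\ in the length-$p$ suffix of $\gamma_j$. Because $p \le |\gamma_i|-1 = |\gamma_j|-1$, this suffix lies entirely inside $\phi(S_j)$ (it misses at least the first character of $\gamma_j$, and in fact misses the whole leading block of ones since that block has length $2k+4$ and $p \le |\gamma_j| - 1$ forces the suffix to start at position $\ge 1$; more carefully, the suffix of length $p$ starts at position $|\gamma_j| - p \ge 1$, and one checks $|\gamma_j| - p$ exceeds the $2k+4$ leading positions precisely when $p \le |\gamma_j| - (2k+4) = |\phi(S_j)|$, which holds since $p \le |\gamma_j|-1$ only when $|\gamma_j| - 1 \le |\phi(S_j)|$; I will handle the borderline $p > |\phi(S_j)|$ range separately, noting that there the suffix includes at most $2k+3$ of the leading ones of $\gamma_j$ since $p \le |\gamma_j|-1$). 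Partitioning $V'$ into at most $\lceil p/(2k+4)\rceil$ windows of length $\le 2k+4$, Observation~\ref{obs:phi} gives at most $3\lceil p/(2k+4)\rceil$ ones in $V'$ from the $\phi(S_j)$ part, plus at most $2k+3$ from any leading-block overlap.

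Now I would combine the counts. On the $U'$ side, positions $0,\ldots,2k+3$ are all ones; on the $V'$ side, any window of $2k+4$ consecutive positions of $\phi(S_j)$ has at most three ones, so among the first $2k+4$ positions of $V'$ at least $(2k+4)-3 = 2k+1$ are zeros (again treating the small leading-block-overlap range separately, where a symmetric count applies). Hence these first $2k+4$ aligned positions already contribute at least $2k+1$ mismatches. To push past $2k$ I would pick up one more mismatch further along: since $p \ge 2k+4$, there is at least one more length-$(2k+4)$ window (or a remaining window overlapping the leading block of $\gamma_i$ in $U'$), and the same sparsity argument yields at least one additional mismatch there, giving $\Ham(U',V') \ge 2k+2 > 2k$, a contradiction.

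The main obstacle I anticipate is the bookkeeping at the boundary cases — precisely which of the leading ones of $\gamma_j$ fall inside the suffix $V'$ for $p$ near $|\gamma_i|-1$, and making sure the windowing on both sides is aligned so that the per-window mismatch lower bounds can be added without double counting. I expect to resolve this by choosing a single clean window of length $2k+4$ that sits inside the leading ones of $U'$ and showing it maps to a window of $\phi(S_j)$ (guaranteed by $p \ge 2k+4$ and $p \le |\gamma_j|-1$), which alone forces $\ge 2k+1$ mismatches, and then a second disjoint stretch of $\ge 2$ positions forcing one more, so the exact constants in $\phi$ (the padding $0^{2k+4}$ and the block $1^{2k+4}$) are comfortably large enough.
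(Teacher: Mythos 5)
Your plan covers only the easy regime of the lemma and hand-waves precisely where the difficulty lies. Write $d=|\gamma_j|-p$ for the starting position of the suffix $V'$ in $\gamma_j$; the lemma allows $d$ as small as $1$. Your central step --- ``among the first $2k+4$ positions of $V'$ at least $2k+1$ are zeros, hence $\ge 2k+1$ mismatches against the leading block $1^{2k+4}$ of $U'$'' --- is valid only when $d\ge 2k+4$, i.e.\ when the first $2k+4$ characters of $V'$ lie inside $\phi(S_j)$ so that Observation~\ref{obs:phi} applies. In the borderline range $1\le d\le 2k+3$ (which you defer with ``a symmetric count applies''), the suffix $V'$ begins inside the leading block of $\gamma_j$, so its first $2k+4$ positions consist of $2k+4-d$ ones followed by $d$ zeros: they largely \emph{match} the ones of $U'$, and the first window yields only $d$ mismatches, possibly just one. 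No symmetric count exists there, and your fallback of ``one more mismatch further along'' adds $+1$, not the $\approx 2k$ still missing. A global count of ones does not rescue this either: in that regime $U'$ and $V'$ contain almost the same number of ones, so the count difference gives no useful lower bound.

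The paper's proof spends most of its effort exactly on this small-shift case: for $1\le d\le 4$ it shows (via the auxiliary observation that $A0^4$ and $0^4B$ with $A,B\in\{1010,1011\}$ have no 1-mismatch prefix-suffix of length $5,\ldots,8$) that every letter block contributes at least $2$ mismatches, giving $\ge 2\ell\ge 2k$ plus one more from the boundary of the leading block; and for $4<d<2k+4$ it argues that every $1010$/$1011$ marker in either string is aligned against zeros of the other, giving $\ge 4\ell\ge 2k+1$ mismatches. Both arguments accumulate mismatches across all $\ell$ letter blocks and use the hypothesis $k\le\ell$, an ingredient entirely absent from your proposal. Your window argument does correctly dispose of the case $d\ge 2k+4$ (the paper's third case), but as it stands the proof has a genuine gap for $1\le d\le 2k+3$, and closing it requires a different idea (misalignment of the $\phi$-markers summed over all blocks), not tighter bookkeeping of the windows.
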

\begin{proof}
  We will show that the prefix $U$ of $\gamma_i$ of length $p$ and the suffix $V$ of $\gamma_j$ of length $p$ have at least $2k+1$ mismatches.
  Let us note that $U$ starts with $1^{2k+4}$.
  The proof depends on the value $d=|\gamma_i|-p$; we have $1 \le d \le |\gamma_i|-2k-4$.
  Let us start with the following observation that can be readily verified.
  \begin{observation}
    For $A,B \in \{1010,1011\}$, the strings $A0^4$ and $0^4B$ have no 1-mismatch prefix-suffix of length in $\{5,\ldots,8\}$.
  \end{observation}
  
  If $1 \le d \le 4$, then $U$ and $V$ have a mismatch at position $2k+4$ since $V$ starts with $1^{2k+4-d}0$.
  Moreover, they have at least $2\ell$ mismatches by the observation (applied for the prefix-suffix length $d+4$).
  In total, $\Ham(U,V) \ge 2\ell+1 \ge 2k+1$.

  If $4 < d < 2k+4$, then every block $1010$ or $1011$ in $\gamma_i$ and in $\gamma_j$ is matched against a block of zeroes in the other string, which gives
  at least $4\ell$ mismatches.
  Hence, $\Ham(U,V) \ge 4\ell \ge 2k+1$.

  Finally, if $2k+4 \le d \le |\gamma_i|-2k-4$, then $U$ starts with $1^{2k+4}$ and every factor of $V$ of length $2k+4$ has at most three ones (see Observation~\ref{obs:phi}).
  Hence, $\Ham(U,V) \ge 2k+1$.
\end{proof}

We set $T=\gamma_1 \ldots \gamma_m$.
The following lemma gives the reduction.

\begin{lemma}\label{lem:Hamming_cover_NP}
  If \textsc{Hamming String Consensus} for $S_1,\ldots,S_m$, $\ell$, $k$ has a positive answer, then the \textsc{General $k$-Approximate Cover} under Hamming distance for $T$, $k$, and $c=|\gamma_i|$
  returns a $k$-approximate cover $C$ such that $S=\psi(C)$ is a Hamming consensus string for $S_1,\ldots,S_m$.
\end{lemma}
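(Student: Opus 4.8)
The plan is to prove two things: \emph{completeness} --- a consensus string yields a $k$-approximate cover of $T$ of the prescribed length $c=|\gamma_i|$, so the \textsc{General $k$-Approximate Cover} instance is positive --- and \emph{soundness} --- every $k$-approximate cover of $T$ of length $c$ decodes under $\psi$ to a consensus string, so whatever cover the problem returns works. Note $n=|T|=mc$, since all $\gamma_i$ have the same length $c=(2k+4)+\ell(4k+12)$; I would also assume $m\ge 2$ and $\ell\ge 1$, as otherwise the consensus problem is trivial (and $m\ge2$ is exactly what is needed for $c\le n-1$, so that the produced instance is legal).

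For completeness, given a consensus string $S$ I would take $C=1^{2k+4}\phi(S)$, of length $c$. As $T=\gamma_1\cdots\gamma_m$, the string $C$ occurs at positions $0,c,2c,\dots,(m-1)c$, and these occurrences tile $T$ exactly. At position $(i-1)c$ the aligned factor is $\gamma_i$, and $\Ham(C,\gamma_i)=\Ham(\phi(S),\phi(S_i))=\Ham(S,S_i)\le k$, since $\phi(0)$ and $\phi(1)$ differ in exactly one position. Because $|C|=c<n$, $C$ is a $k$-approximate cover of $T$, and moreover $\psi(C)=S$.

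For soundness, let $C$ be any $k$-approximate cover of $T$ with $|C|=c$; the heart of the argument is to show that every $k$-mismatch occurrence of $C$ in $T$ starts at a multiple of $c$. The \emph{coarse} step uses the ``barcode'' blocks $1^{2k+4}$: position $0$ of $T$ must be covered, so $C$ occurs at $0$ and $\Ham(C,\gamma_1)\le k$, whence $C[0,2k+3]$ contains at least $k+4$ ones; therefore at any occurrence position $q$ the window $T[q,q+2k+3]$ contains at least $4$ ones, and by Observation~\ref{obs:phi} (no length-$(2k+4)$ factor of any $\phi(S_i)$ has more than three ones), together with the fact that every $1^{2k+4}$-block of $T$ is flanked on both sides by $0^{2k+4}$ and the blocks are pairwise more than $2k+4$ apart, the window must overlap some block $[jc,jc+2k+3]$ in at least $4$ positions, i.e.\ $|q-jc|\le 2k$ for some $j\in\{0,\dots,m-1\}$. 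The \emph{fine} step excludes $q\ne jc$: writing $q=jc+r$ with $0<|r|\le 2k$, if $r>0$ then $j\le m-2$ and the first $c-r$ letters of $T[q,q+c-1]$ form the suffix $\gamma_{j+1}[r,c-1]$, sitting inside $T[q,q+c-1]$ in alignment with the prefix $\gamma_1[0,c-r-1]$; if $r=-s<0$ then $j\ge 1$ and the last $c-s$ letters of $T[q,q+c-1]$ form the prefix $\gamma_{j+1}[0,c-s-1]$, aligned with the suffix $\gamma_1[s,c-1]$. In either case the triangle inequality and $\Ham(C,\gamma_1)\le k$ give $\Ham(\gamma_1,T[q,q+c-1])\le 2k$, so $\gamma_1$ and $\gamma_{j+1}$ would have a $2k$-mismatch prefix-suffix of length $c-|r|\in\{2k+4,\dots,c-1\}$ (here $c\ge 4k+4$), contradicting Lemma~\ref{lem:NP_cover}. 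Hence all occurrences start at multiples of $c$; each covers exactly one of the $m$ blocks, so covering $T$ forces an occurrence at every $0,c,\dots,(m-1)c$, i.e.\ $\Ham(C,\gamma_i)\le k$ for all $i$. Finally $\psi$ reads off $\ell$ fixed positions of its input, and in each $\gamma_i$ those positions carry exactly $S_i[0],\dots,S_i[\ell-1]$ (this is the meaning of $\psi(\gamma_i)=S_i$), so $\Ham(\psi(C),S_i)\le\Ham(C,\gamma_i)\le k$ for all $i$, and $\psi(C)$ is a consensus string.

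I expect the fine alignment step to be the main obstacle: the point is to recognize that a near-aligned occurrence, viewed through the forced occurrence at position $0$, exhibits precisely the $2k$-mismatch prefix-suffix configuration forbidden by Lemma~\ref{lem:NP_cover}. What needs care there is the split into $r>0$ and $r<0$ (the shifted factor either protrudes into the next block or begins inside the previous one), checking the boundary indices ($j\le m-2$, resp.\ $j\ge 1$) so that the relevant neighbouring block exists, and verifying that the prefix-suffix length $c-|r|$ falls in $\{2k+4,\dots,c-1\}$ --- this last point is exactly where the padding lengths $2k+4$ in $\phi$ and $\gamma_i$ are used.
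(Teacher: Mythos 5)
Your proposal is correct and follows essentially the same route as the paper: completeness via the tiling occurrences of $1^{2k+4}\phi(S)$ at multiples of $c$, and soundness by forcing an occurrence at position $0$ and using the triangle inequality together with Lemma~\ref{lem:NP_cover} to rule out occurrence positions not divisible by $c$, then projecting via $\psi$. You merely spell out (with the ones-counting coarse localization and the $r>0$/$r<0$ case split) the divisibility step that the paper states tersely as a direct consequence of Lemma~\ref{lem:NP_cover}.
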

\begin{proof}
  By Lemma~\ref{lem:NP_cover}, 
  if $C$ is a $k$-approximate cover of $T$ of length $c$, then
  every position $a \in \StartOcc_k^H(C,T)$ satisfies $c \mid a$.
  Hence,
  $\StartOcc_k^H(C,T) = \{0,c,2c,\ldots,(m-1)c\}$.
  
  If \textsc{Hamming String Consensus} for $S_1,\ldots,S_m$ has a positive answer $S$, then $1^{2k+4}\phi(S)$ is a $k$-approximate cover of $T$ of length $c$.
  Moreover, if $T$ has a $k$-approximate cover $C$ of length $c$, then for $S = \psi(C)$ and for each $i=1,\ldots,m$, we have that
  \[\Ham(C,T[(i-1)c,ic-1]) \ge \Ham(S,S_i),\]
  so $S$ is a consensus string for $S_1,\ldots,S_m$.
  This completes the proof.
\end{proof}

Lemma~\ref{lem:Hamming_cover_NP} and Fact~\ref{fct:Frances} imply that computing $k$-approximate covers is NP-hard.
Obviously, it is in NP.

\begin{theorem}
  \textsc{General $k$-Approximate Cover} under the Hamming distance is NP-complete even over a binary alphabet.
\end{theorem}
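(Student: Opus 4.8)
The plan is to establish membership in NP first (which is immediate) and then NP-hardness by invoking the reduction machinery already assembled in Lemmas~\ref{lem:NP_cover} and~\ref{lem:Hamming_cover_NP} together with Fact~\ref{fct:Frances}, and finally to adapt the same construction to the seed case. For membership: given a candidate cover $C$ of length $c$, one can in polynomial time compute all $k$-approximate occurrences of $C$ in $T$ (e.g.\ by kangaroo jumps at each of the $n$ starting positions) and check whether their union is all of $[0,n-1]$; hence \textsc{General $k$-Approximate Cover} and \textsc{General $k$-Approximate Seed} are both in NP.

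For hardness of the cover problem, I would take an instance $S_1,\ldots,S_m$, $\ell$, $k$ of \textsc{Hamming String Consensus} over $\{0,1\}$, which is NP-complete by Fact~\ref{fct:Frances}, and build $T=\gamma_1\cdots\gamma_m$ as above with $c=|\gamma_i|=4k+12+\ell(4k+12)$ (a polynomial-size binary string). By Lemma~\ref{lem:Hamming_cover_NP}, $T$ has a $k$-approximate cover of length $c$ if and only if the consensus instance is positive: the forward direction uses that any $k$-approximate cover must start its occurrences exactly at multiples of $c$ (Lemma~\ref{lem:NP_cover} rules out all other alignments, since a shifted occurrence would force a $2k$-mismatch prefix-suffix that does not exist), so each block $T[(i-1)c,ic-1]=\gamma_i$ must be within Hamming distance $k$ of $C$, whence $\psi(C)$ is a consensus string; the backward direction uses that $1^{2k+4}\phi(S)$ covers $T$ whenever $S$ is a consensus string. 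This gives a polynomial-time many-one reduction, so the cover problem is NP-hard, and combined with membership in NP it is NP-complete, all over the binary alphabet.

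For the seed case I would reuse the reduction verbatim after the standard reduction of seeds to covers noted in the preliminaries: $C$ is a $k$-approximate seed of $T$ iff it is a $k$-approximate cover of $\diamondsuit^{|T|}T\diamondsuit^{|T|}$, but since that introduces a non-binary symbol I would instead argue directly. Set $T'=\gamma_1\cdots\gamma_m$ with the same $c=|\gamma_i|$; the point is that $2c\le |T'|$ provided $m\ge 2$ (which we may assume, padding the consensus instance with a copy of $S_1$ if necessary), so $c$ is an admissible seed length. A length-$c$ seed may in principle also cover positions of $T'$ via left- and right-overhangs, i.e.\ via a proper suffix of $C$ being a prefix of $T'$ or a proper prefix of $C$ being a suffix of $T'$; but Lemma~\ref{lem:NP_cover} (read with the relevant prefix-suffix length $p<c$) shows precisely that no such $2k$-mismatch overhang of the right length can exist between any two $\gamma$-blocks, so overhangs contribute nothing new and the analysis of Lemma~\ref{lem:Hamming_cover_NP} carries over unchanged, yielding that $T'$ has a $k$-approximate seed of length $c$ iff the consensus instance is positive. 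The main obstacle is this last point — verifying that the overhang occurrences genuinely add no coverage — but it is already essentially handled by Lemma~\ref{lem:NP_cover}, whose statement covers all prefix-suffix lengths $p\in\{2k+4,\ldots,|\gamma_i|-1\}$; one only has to observe additionally that overhangs shorter than $2k+4$ cover fewer than $2k+4$ positions and cannot bridge a block boundary either, so the forced structure $\StartOcc_k^{\Ham}(C,T')=\{0,c,\ldots,(m-1)c\}$ persists and $\psi(C)$ is again a consensus string. Hence \textsc{General $k$-Approximate Seed} under Hamming distance is also NP-complete over a binary alphabet.
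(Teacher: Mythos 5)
For the statement actually under review --- NP-completeness of \textsc{General $k$-Approximate Cover} under Hamming distance --- your argument is correct and is essentially the paper's own proof: membership in NP is immediate, and hardness follows from Fact~\ref{fct:Frances} via Lemma~\ref{lem:Hamming_cover_NP}, whose forward direction is anchored by the fact that position $0$ of $T$ can only be covered by a full occurrence starting at $0$ (so $\Ham(C,\gamma_1)\le k$), after which Lemma~\ref{lem:NP_cover} forces $\StartOcc_k^{\Ham}(C,T)=\{0,c,\ldots,(m-1)c\}$ and hence $\Ham(\psi(C),S_i)\le\Ham(C,\gamma_i)\le k$ for all $i$.

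Your appended seed argument, however, is not needed for this statement and does not work as described. Reusing $T'=\gamma_1\cdots\gamma_m$ with $c=|\gamma_1|$ destroys the anchor that drives the cover proof: for a seed, position $0$ may be covered by a left overhang, so nothing forces $\Ham(C,\gamma_1)\le k$, and all \emph{full} occurrences may start at positions $\equiv r \pmod{c}$ for a fixed $r\ne 0$, with both ends of $T'$ covered by overhangs; Lemma~\ref{lem:NP_cover} only excludes $2k$-mismatch prefix-suffixes between two $\gamma$-blocks and says nothing about such uniformly shifted occurrences. Concretely, take $m=2$ and $S_1,S_2$ differing in exactly $k$ of their first $t$ letters and exactly $2k$ of their remaining letters (so $\Ham(S_1,S_2)=3k>2k$ and no consensus exists). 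Let $r=(2k+4)+t(4k+12)$, so that $T'[r,r+c-1]=\phi(S_1[t,\ell-1])\,1^{2k+4}\,\phi(S_2[0,t-1])$, and let $C$ equal $\phi(Z)\,1^{2k+4}\,\phi(S_2[0,t-1])$ where $Z$ is ``midway'' between $S_1[t,\ell-1]$ and $S_2[t,\ell-1]$ ($k$ mismatches to each). Then $\Ham(C,T'[r,r+c-1])=k$, the length-$r$ suffix of $C$ is within distance $k$ of the length-$r$ prefix of $T'$, and the length-$(c-r)$ prefix of $C$ is within distance $k$ of the length-$(c-r)$ suffix of $T'$; the full occurrence plus the two overhangs cover all of $T'$, so $C$ is a $k$-approximate seed of length $c$ although the consensus instance is negative, and your claimed equivalence fails. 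This is exactly why the paper's seed reduction (Lemma~\ref{lem:Hamming_seed_NP}) uses the different text $\gamma_1\gamma_1\cdots\gamma_m1^{2k+4}\gamma_m1^{2k+4}$ with $c=|\gamma_1|+2k+4$, and recovers the consensus string only from a cyclic shift of $C$.
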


A lemma that is similar to Lemma~\ref{lem:Hamming_cover_NP} can be shown for approximate seeds.
Let
$$T' = \gamma_1\gamma_1 \ldots \gamma_m 1^{2k+4} \gamma_m 1^{2k+4}.$$

\begin{lemma}\label{lem:Hamming_seed_NP}
  If \textsc{Hamming String Consensus} for $S_1,\ldots,S_m$, $\ell$, $k$ has a positive answer, then the \textsc{General $k$-Approximate Seed} under Hamming distance for $T'$, $k$, and $c=|\gamma_1|+2k+4$
  returns a $k$-approximate seed $C$ such that $S=\psi(C')$ is a Hamming consensus string for $S_1,\ldots,S_m$, for some cyclic shift $C'$ of $C$.
\end{lemma}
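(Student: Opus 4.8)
The plan is to mimic the argument of Lemma~\ref{lem:Hamming_cover_NP}, with the extra bookkeeping needed to handle overhangs and the fact that a seed of $T'$ is a cover of some superstring $\diamondsuit^{|T'|}T'\diamondsuit^{|T'|}$. First I would record the structure of $T'$: it is a concatenation of blocks of the form $\gamma_i$ and $1^{2k+4}$, and every $\gamma_i$ itself begins with $1^{2k+4}$. The key point I want to extract from Lemma~\ref{lem:NP_cover} (suitably restated for this $T'$) is an \emph{alignment rigidity} statement: if $C$ is a $k$-approximate seed of $T'$ of length $c=|\gamma_1|+2k+4$, then every $k$-approximate occurrence of $C$ inside $T'$ (not an overhang) must be aligned so that its length-$(2k+4)$ run of ones lands exactly on one of the $1^{2k+4}$ blocks of $T'$. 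The obstruction to any other alignment is exactly Observation~\ref{obs:phi} together with the prefix-suffix observation used in Lemma~\ref{lem:NP_cover}: a long run of ones in $C$ cannot be matched with fewer than $2k+1$ mismatches against any window of $\phi(S_i)$, and the internal $1010/1011$ blocks cannot be shifted against zero-runs without incurring $\ge 4\ell > 2k$ mismatches. I expect I will need a small strengthening of Lemma~\ref{lem:NP_cover} covering prefix-suffix lengths $p<2k+4$ as well, but these short cases are handled by the two auxiliary observations already in the excerpt.

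Next I would pin down the occurrence set. Because $|C|=c=|\gamma_1|+2k+4$ equals the length of a ``padded'' block $1^{2k+4}\gamma_i$ (using $|\gamma_i|=|\gamma_1|$ for all $i$), and because of the rigidity statement, the only way to cover the first padding-run $1^{2k+4}$ at the very start of $\gamma_1$ and the trailing $1^{2k+4}\gamma_m 1^{2k+4}$ at the end is via occurrences (or overhangs) placed at the block boundaries. I would argue that $\StartOcc_k^H(C,\diamondsuit^{|T'|}T'\diamondsuit^{|T'|})$, restricted to the part that actually intersects $T'$, consists precisely of the positions where a length-$(2k+4)$ one-run of $C$ is aligned with a $1^{2k+4}$ block of $T'$; the doubled blocks $\gamma_1\gamma_1$ and $\gamma_m 1^{2k+4}\gamma_m 1^{2k+4}$ at the two ends are there exactly to guarantee that the overhangs at both ends can be completed, so that $C$ is a genuine seed (a cover of the full superstring) rather than merely covering the interior.

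Then the forward direction is routine: given a consensus string $S$, I claim $C := 1^{2k+4}\phi(S)$ works. Each interior block of $T'$ of the form $\gamma_i = 1^{2k+4}\phi(S_i)$ is covered by the occurrence of $C$ there, which costs $\Ham(1^{2k+4}\phi(S),1^{2k+4}\phi(S_i)) = \Ham(\phi(S),\phi(S_i)) = \Ham(S,S_i)\le k$ (the morphism $\phi$ changes only one position per symbol, so it preserves Hamming distance); and because the padded-block length $c$ tiles $T'$ exactly along the recorded boundaries, consecutive occurrences abut, covering every position of $T'$, while the left overhang at the front of $\diamondsuit^{|T'|}T'\diamondsuit^{|T'|}$ and the right overhang at its end are absorbed by the wildcards. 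Conversely, if $C$ (or rather some cyclic shift $C'$ of it whose one-run has been rotated to the front) is a $k$-approximate seed, then by the rigidity/occurrence analysis the block $\gamma_i$ is covered by a copy of $C'$ with $\Ham(C',\gamma_i)\le k$, hence $\Ham(\psi(C'),S_i) = \Ham(\phi(\psi(C')),\phi(S_i)) \le \Ham(C',\gamma_i) - (\text{ones forced to match in the }1^{2k+4}\text{ prefix}) \le k$, so $S=\psi(C')$ is a consensus string. The cyclic-shift wrinkle is genuinely necessary here and absent from the cover case: a seed may occur at a boundary ``wrapped'', so $C$ need not itself start with the one-run — only some rotation of it does.

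The main obstacle I expect is the rigidity step, i.e.\ proving that no $k$-approximate occurrence or overhang of $C$ can be misaligned relative to the $1^{2k+4}$ skeleton. For interior occurrences this is Lemma~\ref{lem:NP_cover} essentially verbatim (any prefix-suffix overlap of $\gamma_i$ with $\gamma_j$ of length in the relevant range costs $>2k$, and here we only have budget $k$, which is even tighter). The delicate part is the \emph{overhang} occurrences near the two ends of the superstring: a suffix of $C$ that is a prefix of $T'$, or a prefix of $C$ that is a suffix of $T'$, must also be shown to force the one-run onto a $1^{2k+4}$ block, and this is exactly why $T'$ carries the doubled end-blocks $\gamma_1\gamma_1$ and $\gamma_m 1^{2k+4}\gamma_m 1^{2k+4}$ — I would verify that with these duplications the only admissible overhangs are the ``trivial'' ones that extend the boundary-aligned tiling, so that the overhang contributes no additional freedom. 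Once rigidity is established, everything else is the same distance-preservation bookkeeping as in Lemma~\ref{lem:Hamming_cover_NP}.
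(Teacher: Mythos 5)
Your overall strategy (rigid alignment of occurrences with the $1^{2k+4}$ skeleton, per-block transfer of mismatches, cyclic shift) is the same as the paper's, but the central step is asserted rather than proved, and I think you have misread what the end gadget is for. Your ``rigidity'' claim presupposes that $C$ already contains (approximately) a run $1^{2k+4}$ together with a full image $\phi(\cdot)$, with the run near one end of $C$; for an arbitrary $k$-approximate seed $C$ of length $c=|\gamma_1|+2k+4$ this is precisely what must be established first, and Observation~\ref{obs:phi} plus the short prefix-suffix observation cannot establish it, since they speak about the exact strings $\gamma_i$, not about the unknown $C$. The paper obtains this structure as follows: the occurrence covering position $c-1$ must be a full occurrence; a position such as $|T'|-c$ also must be covered by a full occurrence, which lies inside the length-$2c$ suffix $(\gamma_m1^{2k+4})^2$; and the key claim that every length-$c$ factor of $\phi(S_1)1^{2k+4}\phi(S_1)$ is at Hamming distance $>2k$ from every length-$c$ factor of $(\gamma_m1^{2k+4})^2$ then forces, via the triangle inequality, that $C$ is within $k$ of $1^b\phi(S_1)1^{2k+4}0^{2k+4-b}$ or $0^b1^{2k+4}\phi(S_1)1^{2k+4-b}$ with $b\le 2k+4$. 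This is exactly the role of the doubled blocks $\gamma_1\gamma_1$ and $\gamma_m1^{2k+4}\gamma_m1^{2k+4}$: they are not there ``to complete overhangs'' (the forward direction needs no overhangs at all, since $1^{2k+4}\phi(S)1^{2k+4}$ covers $T'$ by full occurrences), but to exclude a seed whose one-run sits deep inside $C$, i.e.\ a shift $b>2k+4$; in that situation each occurrence straddles two blocks $S_i,S_{i+1}$, the per-occurrence mismatch budget splits across two different input strings, and no single consensus string can be extracted. Only once the small-shift form of $C$ is known does Lemma~\ref{lem:NP_cover} pin the occurrences covering each block to positions $\equiv -b \pmod{|\gamma_1|}$ and yield $\Ham(S,S_i)\le k$ for $S=\psi(\rot_b(C))$. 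Your proposal never closes this gap, so the converse direction is incomplete.

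Two smaller but genuine slips: your forward-direction witness $C=1^{2k+4}\phi(S)$ has length $|\gamma_1|\neq c$, so it is not a valid answer to the stated instance (the paper's witness is $1^{2k+4}\phi(S)1^{2k+4}$, of length exactly $c$; note also that $c$ does not tile $|T'|$, coverage comes from overlapping occurrences at positions $0,|\gamma_1|,\dots,m|\gamma_1|$ and $(m+1)|\gamma_1|+2k+4$). Likewise, expressions such as $\Ham(C',\gamma_i)\le k$ compare strings of different lengths; the correct comparison is between $C$ and the length-$c$ window $T'[i|\gamma_1|-b,(i+1)|\gamma_1|+2k+4-b]$, which is what produces the bound $\Ham(S,S_i)\le k$.
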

\begin{proof}
  Assume that $C$ is a $k$-approximate seed of $T'$ of length $c$ and let us consider the approximate occurrence of $C$ that covers position $c-1$ in $T'$.
  Note that it has to be a full occurrence.
  It follows from the next claim that the position of this occurrence is in $\{0,\ldots,2k+3\} \cup \{|\gamma_1|-2k-4,\ldots,|\gamma_1|+2k-3\}$.

  \begin{claim}
    Let $X$ be any length-$c$ factor of $\phi(S_1) 1^{2k+4} \phi(S_1)$
    and $Y$ be any length-$c$ factor of $(\gamma_m 1^{2k+4})^2$.
    Then $\Ham(X,Y) > 2k$.
  \end{claim}
  \begin{proof}
    Let us note that the string $(\gamma_m 1^{2k+4})^2$ contains a middle block $1^{4k+8}$.
    If $Y$ contains this whole block, then certainly $\Ham(X,Y) \ge 2k+1$, since every factor of $X$ of length $4k+8$ contains at most $2k+7$ ones (see Observation~\ref{obs:phi}).
    Otherwise,
    \[
      Y\,=\,1^{2k+4+b} \phi(S_m) 1^{2k+4-b} \quad\text{or}\quad Y\,=\,1^{2k+4-b} \phi(S_m) 1^{2k+4+b}
    \]
    for some $b \in \{0,\ldots,2k+3\}$.
    In particular, $Y$ has $1^{2k+4}$ as a prefix or as a suffix.
    By comparing lengths we see that the length-$(2k+4)$ prefix and suffix of $X$ are factors of $\phi(S_1)$.
    Hence, each of them contains at most three ones (see Observation~\ref{obs:phi}) and $\Ham(X,Y) \ge 2k+1$.
  \end{proof}

  We have established that $C$ has to match, up to at most $k$ mismatches, a string of the form
  \[
    1^b \phi(S_1) 1^{2k+4} 0^{2k+4-b} \quad\mbox{or}\quad 0^b 1^{2k+4} \phi(S_1) 1^{2k+4-b}
  \]
  for some $b \in \{0,\ldots,2k+4\}$.
  We consider the second case; a proof for the first case is analogous (using strings $\gamma'_i = \phi(S_i) 1^{2k+4}$ instead of $\gamma_i$).

  In the second case, $\Ham(C,0^b \gamma_1 1^{2k+4-b}) \le k$.
  Applying Lemma~\ref{lem:NP_cover} for $\gamma_1$ and every $\gamma_j$,
  we get that the starting position $p$ of an occurrence of $C$ in $T'$ that covers the first zero of $\gamma_j$ in the factor $\gamma_1 \ldots \gamma_m$ of $T'$
  has to satisfy $p \equiv -b \bmod {|\gamma_1|}$.

  If \textsc{Hamming String Consensus} for $S_1,\ldots,S_m$ has a positive answer $S$, then the string $1^{2k+4}\phi(S)1^{2k+4}$
  is a $k$-approximate cover (hence, $k$-approximate seed) of $T'$ of length $c$.
  Moreover, if $T'$ has a $k$-approximate seed $C$ of length $c$ such that $\Ham(C,0^b \gamma_1 1^{2k+4-b}) \le k$, then
  for a cyclic shift $C'=\rot_b(C)$, $S = \psi(C')$ and for each $i=1,\ldots,m$, we have that
  \[\Ham(C,\,T[i|\gamma_1|-b,(i+1)|\gamma_1|+2k+4-b]) \ge \Ham(S,S_i),\]
  so $S$ is a consensus string for $S_1,\ldots,S_m$.
  This completes the proof.
\end{proof}
\begin{theorem}
  \textsc{General $k$-Approximate Seed} under the Hamming distance is NP-complete even over a binary alphabet.
\end{theorem}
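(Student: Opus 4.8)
The plan is to combine Lemma~\ref{lem:Hamming_seed_NP} with Fact~\ref{fct:Frances}, exactly in the spirit of the cover case treated just above. First I would dispatch membership in NP: given a candidate string $C$ of length $c$, one verifies in polynomial time whether $C$ is a $k$-approximate seed of $T'$ by computing $\StartOcc_k^{\Ham}(C,\diamondsuit^{|T'|}T'\diamondsuit^{|T'|})$ (say, with the kangaroo method) and then checking, via a union of intervals, that $\Covered_k^{\Ham}(C,\diamondsuit^{|T'|}T'\diamondsuit^{|T'|})$ equals the length of the padded string; recall that a $k$-approximate seed of $T'$ is precisely a $k$-approximate cover of $\diamondsuit^{|T'|}T'\diamondsuit^{|T'|}$. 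Hence the problem lies in NP.

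For hardness, I would spell out the many-one reduction from \textsc{Hamming String Consensus}. Given $S_1,\ldots,S_m$ of length $\ell$ over $\{0,1\}$ and $k\le\ell$, construct in polynomial time the binary string $T'$ and the length $c=|\gamma_1|+2k+4$ as in the paragraph preceding Lemma~\ref{lem:Hamming_seed_NP}; all $\gamma_i$ share the same length, so this is straightforward, and the routine side conditions $c\ge 1$ and $2c\le|T'|$ hold because $T'$ contains at least two full copies of $\gamma_1$ together with the two blocks $1^{2k+4}$. Lemma~\ref{lem:Hamming_seed_NP} gives one direction: a consensus string $S$ yields the $k$-approximate seed $1^{2k+4}\phi(S)1^{2k+4}$ of length $c$. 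The proof of that lemma also supplies the converse: any $k$-approximate seed $C$ of $T'$ of length $c$ produces, through $S=\psi(C')$ for a suitable cyclic shift $C'$ of $C$, a string with $\Ham(S,S_i)\le k$ for every $i$. Therefore \textsc{Hamming String Consensus} has a positive answer if and only if $T'$ has a $k$-approximate seed of length $c$, which is the required reduction, and it maps yes-instances to yes-instances and no-instances to no-instances.

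I do not expect a genuine obstacle here: the combinatorial substance has been absorbed into Observation~\ref{obs:phi} and Lemmas~\ref{lem:NP_cover} and~\ref{lem:Hamming_seed_NP}. The only point demanding care is making the equivalence fully biconditional, since the statement of Lemma~\ref{lem:Hamming_seed_NP} is phrased in a single direction; I would therefore make explicit that its proof establishes both implications. Combined with Fact~\ref{fct:Frances} and the fact that $T'$ is a binary string, this yields NP-completeness of \textsc{General $k$-Approximate Seed} under the Hamming distance even over a binary alphabet.
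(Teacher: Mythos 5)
Your proposal is correct and matches the paper's argument: the paper derives this theorem exactly as in the cover case, by combining Lemma~\ref{lem:Hamming_seed_NP} (whose proof indeed supplies both directions of the reduction from \textsc{Hamming String Consensus}) with Fact~\ref{fct:Frances}, plus the easy observation that the problem is in NP. Your explicit remark that the lemma is stated one-directionally but its proof yields the biconditional is precisely the right point of care.
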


\section{Conclusions}
We have presented several polynomial-time algorithms for computing restricted approximate covers and seeds and $k$-coverage under Hamming, Levenshtein and weighted edit distances
and shown NP-hardness of non-restricted variants of these problems under the Hamming distance.
It is not clear if any of the algorithms are optimal.
The only known related conditional lower bound shows hardness of computing the Levenshtein distance of two strings in strongly subquadratic time~\cite{DBLP:journals/siamcomp/BackursI18};
however, our algorithms for approximate covers under edit distance work in $\Omega(n^3)$ time.
An interesting open problem is if restricted approximate covers or seeds under Hamming distance, as defined in \cite{DBLP:journals/jalc/ChristodoulakisIPS05,SimParkKimLee},
can be computed in $\Oh(n^{3-\epsilon})$ time, for any $\epsilon>0$.
Here we have shown an efficient solution for $k$-restricted versions of these problems.

\bibliographystyle{plainurl}
\bibliography{approx_quasi_arxiv}

\end{document}